\newtheorem{proposition}{Proposition}
\newcommand{\calY}{\mathcal Y}
\newcommand{\calZ}{\mathcal Z}
\newcommand{\E}{\mathbb E}
\newcommand{\R}{\mathbb R}
\newcommand{\Y}{\mathbb Y}
\newcommand{\Z}{\mathbb Z}
\newcommand{\prob}{\mathrm{Pr}}
\newcommand{\dd}{\mathrm{d}}
\newcommand{\indicator}{\mathrm{I}}
\newcommand{\iid}{\stackrel{\mbox{\scriptsize iid}}{\sim}}
\newcommand{\ind}{\stackrel{\mbox{\scriptsize ind}}{\sim}}
\renewcommand{\mid}{\,|\,}
\renewcommand{\vspace}[1]{}
\newcommand{\MBtext}[1]{{\color{black}{#1}}} 
\renewcommand{\algocf@captiontext}[2]{#1\algocf@typo. \AlCapFnt{}#2} 
\def\@algocf@capt@plain{top}
\renewcommand{\algocf@makecaption}[2]{%
	\addtolength{\hsize}{\algomargin}%
	\sbox\@tempboxa{\algocf@captiontext{#1}{#2}}%
	\ifdim\wd\@tempboxa >\hsize
	\hskip .5\algomargin%
	\parbox[t]{\hsize}{\algocf@captiontext{#1}{#2}}
	\else%
	\global\@minipagefalse%
	\hbox to\hsize{\box\@tempboxa}
	\fi%
	\addtolength{\hsize}{-\algomargin}%
}
\def\spacingset#1{\renewcommand{\baselinestretch}%
{#1}\small\normalsize} \spacingset{1}
\title{MCMC for Bayesian nonparametric mixture modeling under differential privacy}
\author[1]{Mario Beraha}
\affil[1]{Department of Mathematics, Politecnico di Milano}
\author[1,2]{Stefano Favaro}
\affil[2]{Collegio Carlo Alberto}
\author[3]{Vinayak Rao}
\affil[3]{Department of Statistics, Purdue University}
\begin{document}

\maketitle

\begin{abstract}
Estimating the probability density of a population while preserving the privacy of individuals in that population is an important and challenging problem that has received considerable attention in recent years. While the previous literature focused on frequentist approaches, in this paper, we propose a Bayesian nonparametric mixture model under differential privacy (DP) and present two Markov chain Monte Carlo (MCMC) algorithms for posterior inference.
One is a marginal approach, resembling Neal's algorithm 5 with a pseudo-marginal Metropolis-Hastings move, and the other is a conditional approach.
Although our focus is primarily on local DP, we show that our MCMC algorithms can be easily extended to deal with global differential privacy mechanisms.
Moreover, for some carefully chosen mechanisms and mixture kernels, we show how auxiliary parameters can be analytically marginalized, allowing standard MCMC algorithms (i.e., non-privatized, such as Neal's Algorithm 2) to be efficiently employed.
Our approach is general and applicable to any mixture model and privacy mechanism.
In several simulations and a real case study, we discuss the performance of our algorithms and evaluate different privacy mechanisms proposed in the frequentist literature. 
\end{abstract}

\noindent%
{\it Keywords:} Dirichlet Process, Data augmentation, Pseudo-marginal MCMC


\section{Introduction}

Mixture models offer a natural and computationally convenient framework for density estimation and clustering \citep{Fru19}. Bayesian nonparametric (BNP) mixture models, introduced in the seminal work of \cite{Lo84}, assume that each observation belongs to one of a potentially infinite number of groups or components, each modeled with parametric density function $f(\cdot \mid \phi)$ for some parameter $\phi$. Formally, $n\geq1$ observations are modeled as a random sample $Y=(Y_1, \ldots, Y_n)$ distributed as
$    Y_i \mid P \iid \int_{\Theta} f(\cdot \mid \theta) P(\dd \theta)$,
with the mixing distribution $P$ being an almost surely discrete random probability measure, i.e. 
 $   P(\cdot) = \sum_{h \geq 1} w_{h} \delta_{\phi_h}(\cdot)$,
with the $w_{h}$'s providing the relative prevalences of the mixture components labelled by the $\phi_{h}$'s. Mixture models are increasingly popular in applied contexts such as sociology \citep{land2001introduction, collins2009latent}, healthcare analysis \citep{singh2010inpatient}, multi-omics data \citep{wang2020brem}, and many other contexts involving sensitive data \citep{schlattmann2009medical}. 
\MBtext{In such contexts, mixtures can be used to identify homogeneous subpopulations in the data, i.e., clusters, and leverage such information to develop personalized medical treatments \citep{pedone2024personalized}, or used to model nuisance parameters in a semiparametric framework.}
In all these settings, there is a need to balance learning the mixing distribution $P$ with protecting individual's privacy. 

In this paper, we consider the statistical problem of fitting a nonparametric Dirichlet process mixture model when  one has access only to a differentially-private (DP)~\citep{dwork2006differential} database $Z$, obtained from the original confidential database $Y$ via some publicly known privatization mechanism.
We contrast our problem with so-called algorithmic approaches~\citep{dwork_roth}, whose goal is to propose randomized algorithms that preserve individual privacy. Algorithmic approaches for (Gaussian) mixture models under DP have been proposed starting from \cite{nissim2007smooth}; see also \cite{kamath2019differentially} and the references therein. 
In the Bayesian framework, \cite{dimitrakakis2017differential}, \cite{savitsky2022bayesian}, and \cite{hu2022mechanisms} studied the privacy properties of (pseudo) posteriors and established 
conditions under which samples from the posterior or from the posterior predictive distribution yields a perturbed database $Z$ with valid privacy guarantees.
\cite{bernstein2018differentially, bernstein2019differentially} focussed on likelihoods belonging to exponential family and linear regression, and showed that by perturbing sufficient statistics appropriately, 
and using a suitable asymptotic approximation, the posterior distribution given $Z$ is easy to compute.

When it comes to estimation given privatized data, in the frequentist literature, the focus is usually on estimators for quantities of interest, or statistical tests based on $Z$, while also quantifying the loss of efficiency due to privacy in a minimax framework 
\citep[see][and the references therein]{wasserman2010statistical, duchi_minimax_2018}. Then, one usually seeks an optimal pair of estimator and perturbation mechanism that match the minimax rate.
\MBtext{By contrast, we assume that the mechanism is given and propose two algorithms for posterior inference that are not specific to the mechanisms.
The advantages of taking a Bayesian approach are multi-faceted. 
First, it makes it straightforward to bring prior information in the model.
Second, Bayesian estimators are naturally endowed with uncertainty quantification via the posterior distribution. Third, it is straightforward to embed a model in more complex, often semiparametric, ones, e.g., by assuming a mixture likelihood for the error distribution in a regression model, and taking into account additional information coming from  exogenous covariates by assuming, e.g., one of the variations of the dependent Dirichlet process prior \citep{quintana2022dependent}.}
Close to our work, \cite{karwa2015private} and \cite{ju_data_2022} treat the $Y_i$'s as missing variables and develop general variational Bayes and MCMC algorithms for posterior inference. One of our proposed samplers, the conditional sampler, is an instance of the algorithm of~\citet{ju_data_2022}, though our other sampler, the marginal sampler, takes a pseudo-marginal MCMC approach, allowing us to better trade off computation and mixing by introducing additional auxiliary variables.

\subsection{Our contributions}
In this paper, we investigate posterior inference in BNP mixture models under differential privacy. We start by focusing on the notion of \textit{local DP} \citep{dwork2006differential}, where the sanitized $Z = (Z_1, \ldots, Z_n)$ is obtained by perturbing each $Y_i$ individually.
Here, if the original dataset $Y$ is modeled as a realization of a BNP mixture model, then, after marginalizing out the $Y_i$'s, the $Z_i$'s again follow a nonparametric mixture model, whose kernel in general intractable.
We propose two MCMC sampling schemes for posterior density estimation and clustering given $Z$. The first is a  \emph{marginal} algorithm, where we marginalize out the infinite-dimensional measure $P$, and  can be considered a version of Algorithm 5 in \cite{neal2000markov} for private data. Here, our approach has the flavor of the pseudo-marginal MCMC approach \citep{And09}, 
with the latent $Y_i$'s introduced back into the MCMC state as auxiliary variables to deal with intractable Metropolis-Hastings probabilities. The second algorithm is a \emph{conditional} algorithm, and it can be considered as a private version of the slice-sampler of \cite{kalli2011slice}. In this setting, the $Y_i$'s are part of the state of the MCMC and we adopt the same blocked-Gibbs sampling strategy as \cite{ju_data_2022}, alternately sampling $P$ given the $Y_i$'s and the $Y_i$'s given $P$ and the $Z_i$'s. For specific local DP channels (including the Gaussian mechanism), we show how one can 
combine the privacy mechanism with a carefully chosen mixture kernel and base measure, allowing us to analytically marginalize with respect to the $Y_i$'s and obtain a conjugate mixture model of the $Z_i$'s. Then, the posterior distribution can be sampled using efficient algorithms such as Algorithm 3 in \cite{neal2000markov} or the split-merge algorithm in \cite{jain2004split}. We also extend our framework to the setting of \emph{global DP}, where the sanitized database $W = (W_1, \ldots, W_k)$ is obtained by perturbing the sample $Y = (Y_1, \ldots, Y_n)$ jointly, and show that the two algorithms proposed can be adapted in a straightforward way to this setting.

For concreteness, we consider BNP mixture models based of a Dirichlet process prior for $P$ \citep{Fer73, Lo84}, such that the random probabilities $w_{h}$'s follows the stick-breaking construction \citep{Set94}. However, the ideas developed in this paper can be easily extended to more general prior distributions, under which marginal and conditional MCMC sampling schemes for BNP mixture models have been developed in the non-private setting \citep{Griffin11,Bar13,Fav13,Fav12,Lom17,miller2018mixture,ArDeInf19}. See \cite{wade_review} for a recent review.
Our MCMC sampling schemes do not depend on a specific choice of the DP channel used to produce the sanitized database $Z$, and we consider popular channels such as adding Laplace and Gaussian noise, wavelet-based perturbation \citep{duchi_minimax_2018, butucea2020local}, and sampling from a perturbed histogram \citep{wasserman2010statistical}. 


\MBtext{We find that except when the privacy requirement is low, the marginal algorithm produces better mixing MCMC chains. However, for large sample sizes, the conditional algorithm is more than 10 times faster, and this can potentially compensate for the slower mixing. This is similar to what happens when directly working with the confidential data.}


The paper is structured as follows. In Section \ref{sec:privacy} we present an overview of DP, focussing on some definitions of local DP and DP channels. Section \ref{sec:model} contains the main results of the paper, introducing the private Neal 5 algorithm, the private slice-sampling and some variations thereof. In Section \ref{sec:num} we present a simulation study comparing the proposed algorithms over different choices of the DP channel, and Section \ref{sec:blood_donors} contains an application to real data on blood donors at the Milano Department of the Associazione Volontari Italiani del Sangue (AVIS). In Section \ref{sec:disc} we discuss the limitations of the proposed approach, and present some further research directions. Our algorithms have been implemented in \texttt{C++} as part of the \texttt{BayesMix} library \citep{beraha2022bayesmix} and interested users can easily extend our examples to different prior distributions, mixture kernels and/or privacy mechanisms.


\section{Local Differential Privacy}\label{sec:privacy}

We present a brief overview of DP, focusing on  \emph{local} and \emph{non-interactive} DP, arguably the most popular notion of privacy in the statistical framework of density estimation \citep{butucea2020local, kroll21, butucea_interactive_2022, sart2023}. For any $n\geq1$, denote by $Y_1, \ldots, Y_n$ the confidential observations, defined on a probability space $(\Omega, \mathcal A, \prob)$, and taking values in a general measurable (Polish) space $(\Y, \calY)$. Under local DP,  the sample $Y = (Y_1, \ldots, Y_n)$ is randomly perturbed to produce a \emph{sanitized} database $Z = (Z_1, \ldots, Z_n) \in \mathbb Z^n$, with the perturbation being encoded by a collection of \emph{channels}  $Q_{1},\ldots,Q_{n}$, where $Q_i: \mathbb Z \times \mathbb Y \rightarrow \R_+$ is such that,  for any $y \in \mathbb Y$, $Q_i(\cdot \mid y) := Q_i(\cdot , y)$ is a conditional distribution with density function  over $\Z$, $i=1,\ldots,n$. The \emph{sanitized} database $Z \sim \otimes_{i=1}^n Q_i(\cdot \mid Y_i)$ is released for public use. Note that local DP does not require a trusted data holder even for collecting the data, since each individual can perturb their datum $Y_i$ independently of all the other individuals.  Depending on the definition of local DP, the channels $Q_i$'s need to satisfy some properties. We recall the most popular definitions below.

In her seminal work, \cite{dwork2006differential} introduced the framework of $\varepsilon$-DP, where
\begin{equation}\label{eq:dp_def}
\sup_{S \in \calZ} \frac{Q_i(S \mid Y_i = y)}{Q_i(S \mid Y_i = y^\prime)} \leq e^\varepsilon \quad \text{for any $y, y^\prime \in \Y$}.
\end{equation}
The parameter $\epsilon$ controls the \emph{privacy loss} that an individual can suffer when analyzing their data. See \cite{bun2016concentrated} for further discussions. As shown in \cite{wasserman2010statistical}, $\varepsilon$-DP also implies an upper bound on the power of the statistical test aimed at identifying the presence of a particular individual in the unobserved original sample. Among several relaxations of $\varepsilon$-DP, the most popular is $(\varepsilon, \delta)$-DP, requiring:
\begin{equation}\label{eq:eps_delta_dp}
    Q_i(S \mid Y_i = y) \leq e^{\varepsilon} Q_i(S \mid Y_i = y^\prime) + \delta, \quad \text{for all measurable  $S$, and for any $y, y^\prime \in \Y$.} 
\end{equation}
According to this definition, DP holds \emph{most of the times},  with the probability that individuals suffer a privacy loss greater than $\varepsilon$ bounded by $\delta$. Another relaxation, referred to as concentrated DP \citep{dwork2016concentrated, bun2016concentrated}, has gained significant interest, to the extent of being adopted by the US Census Bureau  \citep{Garfinkel2022Differential}. We say that a channel $Q_i$ is $\rho$ zero concentrated DP ($\rho$-zCDP) if  
$
    D_\alpha\left(Q_i(\cdot \mid y) || Q_i(\cdot \mid y^\prime)\right) \leq  \rho \alpha \quad \text{for any $y, y^\prime \in \Y$, }
$
where $D_\alpha$ is the $\alpha$-R\'enyi divergence. We refer to \citet{dwork2016concentrated} and \citet{bun2016concentrated} 
for additional details. 

\subsection{Local DP: channels}\label{sec:channels}
We present some popular choices of the local DP perturbation channel $Q$ we will consider. The most natural one adds independent and identically distributed zero-mean noise:\begin{displaymath}
    Z_i = Y_i + \gamma_i, \qquad \gamma_i \iid f_{\gamma}.
\end{displaymath}
To match any of the notions of DP discussed above, we must require that the $Y_i$'s have a bounded support. In particular, consider unidimensional data on $\mathbb{Y}$, and denote by $\Delta = \sup_{y, y^\prime} |y - y^\prime|$ the diameter of $\mathbb Y$. If $\gamma_i \sim \mathcal{L}(0,\Delta/\varepsilon)$, with $\mathcal{L}$ the Laplace distribution, then $Q_i$ satisfies $\varepsilon$-DP, and hence $(\varepsilon, \delta)$-DP for any choice of $\delta$, as well as $\frac{1}{2} \varepsilon^2$-zCDP \citep[][Propositon 1.4]{bun2016concentrated}. The use of the Gaussian distribution is also popular, that is $\gamma_i \sim \mathcal N(0, \eta^2)$, with $\mathcal{N}$ being the Gaussian distribution. The choice of the Gaussian distribution is known to satisfy $(\varepsilon, \delta)$-DP \citep{dwork_roth, balle_wang}, for suitable values of $\varepsilon$ and $\delta$, and $\rho$-zCDP, though it does not satisfy $\varepsilon$-DP. 

Other DP mechanisms, not based on adding random noise, have been proposed in the literature. 
The minimax framework of \cite{duchi_minimax_2018} seeks the optimal pair of perturbation and estimator for a given problem, thus making the optimal estimator and perturbation mechanism related \citep{duchi_minimax_2018,butucea2020local,butucea_interactive_2022,kroll21,sart2023}. 
We review the wavelet-based perturbation mechanism of \cite{butucea2020local} in greater detail in Appendix \ref{app:background}.
Later, we will focus on one particular example of the this based on the Haar basis, whereby $Z_i = (Z_{i, j, k})$ for $j=0, \ldots, J$ and $k \leq 2^j$ is distributed as
\begin{equation}\label{eq:wav_priv}
    Z_{i, j, k} \mid Y_i \sim \mathcal{L}(\psi_{j, k}(Y_i), s),
\end{equation}
where $\psi_{j, k}(x) = 2^{j/2}\left(  \indicator_{[0, 1/2)}(2^j x - k)  - \indicator_{[1/2, 1)}(2^j x - k) \right)$, $s = \frac{12}{\varepsilon} \frac{\sqrt{2}}{\sqrt{2} - 1} 2^{J/2}$.
Here $J$ is a parameter that trades-off the amount of information released and the amount of noise added.
This perturbation guarantees $\varepsilon$-DP \citep[][Equation 3.1]{butucea2020local}.


\section{BNP mixture modeling under local DP}\label{sec:model}

We assume a BNP mixture model for the confidential $Y_i$'s.  Taking into account the local DP channels $Q_i$'s, we write the following hierarchical model for sanitized observations:
\begin{equation}\label{eq:bnp_mix}
\begin{aligned}
    Z_i \mid Y_i & \ind Q_i(\cdot \mid Y_i), \qquad &i=1, \ldots, n \\
     Y_i \mid \theta_i & \ind f(\cdot \mid \theta_i), \ \ \theta_i \mid P  \iid P,  \qquad &i=1, \ldots, n \\
    P & \sim \mathscr Q. &
\end{aligned}
\end{equation}
Here, $f(\cdot \mid \theta)$ is a probability density kernel indexed by a parameter $\theta \in \Theta$, and $P$ is an almost surely discrete random probability measure over $\Theta$, with its law $\mathscr Q$ being the Dirichlet process with concentration parameter $\alpha$ and base measure $G_0$ \citep{Fer73}. 
Thus, $P = \sum_{h \geq 1} w_h \delta_{\phi_h}$ where $(\phi_h)_{h \geq 1}$ is a collection of i.i.d. random variables from $G_0$ and $(w_h)_{h \geq 1}$ is a sequence of positive random weights summing to one, obtained, for instance, via the so-called stick breaking process:  $w_h = \nu_h \prod_{j < h} (1 - \nu_h)$, where $\nu_h \iid \mbox{Beta}(1, \alpha)$.

\MBtext{MCMC computations for mixture models when dealing directly with the $Y_i$'s are well-established. The cornerstone paper \cite{neal2000markov} proposed 8 different algorithms that fall in the class of \emph{marginal} samplers, whereby the mixing measure $P$ is integrated from the model. Another class of algorithms are the so-called \emph{conditional} ones, exemplified by the slice sampler by \cite{kalli2011slice} and the blocked-Gibbs sampler by \cite{ishwaran2001gibbs}, which rely on a (possibly random) truncation of the mixing measure $P$.
Generally speaking, marginal samplers tend to produce better-mixing chains since they operate on fewer parameters but are inherently sequential. Conditional ones, instead, can be parallelized and tend to have faster runtimes but exhibit slower mixing. See, e.g., \cite{beraha2022bayesmix} for a detailed performance comparison.}

{\spacingset{1.0}
\begin{table}[t!]
\centering
 \begin{tabular}{c || c | c | c} 
 \hline
 Algorithm & Data & State & Requirements \\ [0.5ex] 
 \hline
 \multirow{2}{5em}{Neal 2} & Original  & $c_i$'s, $\theta^*_h$'s & $f$ conjugate to $G_0$ \\
 & Sanitized & $c_i$'s, $\theta^*_h$'s & $g_i$ tractable and conjugate to $G_0$  \\
 \hline
 \multirow{2}{5em}{Neal 3} & Original  & $c_i$'s & $f$ conjugate to $G_0$ \\
 & Sanitized & $c_i$'s & $g_i$ tractable and conjugate to $G_0$ \\
 \hline
 \multirow{2}{5em}{Neal 5} & Original  & $c_i$'s, $\theta^*_h$'s & any $f$ and $G_0$ \\
 & Sanitized & $c_i$'s, $\theta^*_h$'s, $\tilde Y_{i,j}$'s & any $f$ and $G_0$ \\
 \hline 
 \multirow{2}{5em}{Slice} & Original  & $P$, $c_i$'s & any $f$ and $G_0$ \\
 & Sanitized & $P$, $c_i$'s, $Y_{i}$'s & any $f$ and $G_0$ \\
 \hline
 \end{tabular}
 \caption{State of the MCMC algorithms and the modeling assumptions needed when working on the original dataset or the sanitized ones. The $c_i$'s are cluster assignments, $\theta^*_h$'s are cluster-specific parameters, $P$ is the mixing measure, $Y_i$'s are confidential observations, and $\tilde Y_{i, j}$ is the $j$-th auxiliary variable corresponding to confidential observation $i$.}
 \label{tab:algo_table}
\end{table}
}

We propose two MCMC algorithms for posterior sampling from this model.  
The first is a marginal MCMC sampler that is an instance of the so-called pseudo-marginal MCMC, while the second is a conditional MCMC sampler based on missing data imputation.
\MBtext{Table \ref{tab:algo_table} shows a summary of the MCMC algorithms considered in this paper, the corresponding state of the chain, and modeling assumptions.}

\vspace{-1em}
\subsection{Marginal MCMC sampling: private Neal 5 algorithm}

Starting with \eqref{eq:bnp_mix}, we operate two marginalizations: first we integrate out $P$ by relying on the well-known generalized P\'olya urn characterization of the Dirichlet measure \citep{Fer73}, and then marginalize out the $Y_i$'s,  To this end, letting $\bm \theta^* = (\theta^*_1, \ldots, \theta^*_k)$ be the unique values in $\theta_1, \ldots, \theta_n$ and denoting by $n_h = \sum_{i=1}^n \indicator [\theta_i = \theta^*_h]$, we obtain
\begin{align}\label{eq:marg_marg_mix}
&Z_i \mid \theta_i  \ind g(\cdot \mid \theta_i) = \int_{\mathbb Y} Q_i(\cdot \mid y) f(y \mid \theta_i) \dd y, \qquad i=1, \ldots, n \\
&\prob(\bm \theta \in \dd \bm \theta)= \frac{\alpha^k}{(\alpha)_{(k)}}  \prod_{h=1}^k (n_h - 1)! G_0(\dd \theta^*_h).
\end{align}
The main challenge in sampling from the conditional distribution \eqref{eq:marg_marg_mix} is that $g(\cdot \mid \theta)$ is intractable, in the sense that it cannot be evaluated analytically. \cite{Ber21ABC} propose a general strategy to deal with intractable kernels in the context of BNP mixtures, proposing the use of approximate Bayesian computation. However, this might be inefficient for large datasets.
Instead, we develop an exact pseudomarginal MCMC sampler~\citep{And09} using an unbiased estimator $\hat g$ for the mixture kernel.

From \eqref{eq:marg_marg_mix}, it is clear that the conditional distribution of $\theta_i$ is
\begin{equation}\label{eq:xi_fullcond}
    \theta_i \mid \theta_{-i}, Z  \sim  b \pi(\theta_i \mid \theta_{-i}) g(Z_i \mid \theta_i),
\end{equation}
where $\theta_{-i}$ is $(\theta_1, \ldots, \theta_{i-1}, \theta_{i+1}, \ldots, \theta_n)$,  $b$ is a normalizing constant, and $\pi(\theta_i \mid \theta_{-i}):=  \sum_{j=1}^{n-1} \frac{1}{\alpha + n - 1} \delta_{\theta_j}(\cdot) +  \frac{\alpha}{\alpha + n - 1}  G_0(\cdot)$ is the conditional prior of $\theta_i$ given $\theta_{-i}$.
As in Algorithm 5 of \cite{neal2000markov}, we can sample from the conditional distribution \eqref{eq:xi_fullcond} by a Metropolis-Hastings step, where the proposal is $\theta^\prime_i \sim \pi(\cdot \mid \theta_{-i})$, and the acceptance rate is
\begin{align*}
    \min\left\{1; \frac{\pi(\theta_i \mid \theta_{-i}) g(Z_i \mid \theta_i)}{\pi(\theta^\prime_i \mid \theta_{-i}) g(Z_i \mid \theta^\prime_i)} \frac{\pi(\theta^\prime_i \mid \theta_{-i})}{\pi(\theta_i \mid \theta_{-i})} \right\} = \min\left\{1; \frac{g(Z_i \mid \theta_i)}{g(Z_i \mid \theta^\prime_i)} \right\}.
\end{align*}
Since $g(\cdot \mid \theta)$ cannot be evaluated analytically, we employ a pseudo-marginal MCMC step that only requires an unbiased estimator for $g(\cdot \mid \theta)$. Assuming without loss of generality that $Q_i$ has a density $q_i$ with respect to some dominating measure, a natural estimator is
\begin{equation}\label{eq:pseudo_prop}
    \hat g_i := \hat g(Z_i \mid \theta_i) = \frac{1}{m} \sum_{j=1}^m q_i(Z_i \mid \tilde Y_{i, j}), \qquad \tilde Y_{i, j} \sim f(\cdot \mid \theta_i). 
\end{equation}
The state space of the resulting sampler consists then of $(\theta_i, \hat g_i)$. Each iteration, and for each $i$, the sampler proposes new values $(\theta^\prime_i, \hat g^\prime_i)$ from equations \eqref{eq:xi_fullcond} and \eqref{eq:pseudo_prop} and accepts them with probability $\min\left\{1; \hat g^\prime_i / \hat g_i \right\}$.
Following Algorithm 2 in \cite{neal2000markov}, rather than storing the $\theta_i$'s, one can also just store their unique values $\theta^*_h$ and cluster allocations $c_i$, such that $\theta_i = \theta^*_h$ if and only if $c_i = h$. In this setting, it is possible to additionally update the $\theta^*$ directly to improve mixing. If this is needed, we must include the $\tilde{Y}_{i,j}$ in our state space (rather than just $\hat{g}$), with $p(\theta^*_h \mid \cdots) \propto \prod_{i: c_i = h} f(\{\tilde{Y}_{i,j}\} \mid \theta^*_h) G_0(\dd \theta^*_h)$.  Algorithm \ref{algo:neal5} details the pseudocode of our marginal sampling. 

For the setting of $\varepsilon$-DP, {we can easily extend the results in \cite{ju_data_2022} to our pseudo-marginal move obtaining  lower bound for the acceptance rate: 
\begin{proposition}\label{prop:lb}
    If the channel $q_i$ satisfies $\varepsilon$-DP as in \eqref{eq:dp_def} then 
    $\min\left\{1; \hat g^\prime_i / \hat g_i \right\} \geq e^{-\varepsilon}$
\end{proposition}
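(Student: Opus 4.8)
The plan is to reduce everything to the pointwise form of the $\varepsilon$-DP constraint and then exploit that it holds uniformly over the conditioning value. First I would translate \eqref{eq:dp_def}, which is stated for measurable sets $S$, into the statement that the channel densities satisfy $e^{-\varepsilon} \le q_i(z \mid y)/q_i(z \mid y') \le e^{\varepsilon}$ for every $y, y' \in \Y$ and (almost) every $z \in \Z$, in particular at the realized value $z = Z_i$. This is the routine measure-theoretic step: if the density ratio exceeded $e^{\varepsilon}$ on a set of positive $Q_i(\cdot\mid y')$-measure, integrating over that set would contradict \eqref{eq:dp_def}; finiteness of $\varepsilon$ moreover forces mutual absolute continuity, so the densities are simultaneously positive and $\hat g_i, \hat g'_i > 0$ whenever they are well-defined.

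The core of the argument is a pathwise comparison of the two estimators in \eqref{eq:pseudo_prop} that uses nothing about the distributions of the auxiliary variables. Write $\tilde Y'_{i,j} \sim f(\cdot \mid \theta^\prime_i)$ for the variables drawn to form the proposed $\hat g^\prime_i$. Fix any index $j$. Since $\tilde Y'_{i,j}$ and each $\tilde Y_{i,k}$ lie in $\Y$, applying the density bound with $y = \tilde Y_{i,k}$ and $y' = \tilde Y'_{i,j}$ gives $q_i(Z_i \mid \tilde Y'_{i,j}) \ge e^{-\varepsilon}\, q_i(Z_i \mid \tilde Y_{i,k})$ for every $k = 1, \ldots, m$; averaging the right-hand side over $k$ yields $q_i(Z_i \mid \tilde Y'_{i,j}) \ge e^{-\varepsilon}\,\hat g_i$. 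As this holds for every $j$, averaging over $j$ gives $\hat g^\prime_i \ge e^{-\varepsilon}\,\hat g_i$, hence $\hat g^\prime_i/\hat g_i \ge e^{-\varepsilon}$ and therefore $\min\{1;\hat g^\prime_i/\hat g_i\} \ge \min\{1; e^{-\varepsilon}\} = e^{-\varepsilon}$.

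I do not expect a serious obstacle here; the only point requiring care is the passage from the set-based definition of $\varepsilon$-DP to the pointwise density inequality, together with the attendant positivity remarks, and even this can be bypassed by positing the density-level version of \eqref{eq:dp_def}, as is common. It is worth noting in passing that the same pathwise bound gives the symmetric inequality $\hat g^\prime_i/\hat g_i \le e^{\varepsilon}$, so the pseudo-marginal acceptance probability is pinned in $[e^{-\varepsilon},1]$; this is the reason privacy prevents the marginal sampler from ever rejecting too aggressively.
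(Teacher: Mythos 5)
Your proof is correct and follows essentially the same route as the paper's: apply the pointwise density-ratio form of $\varepsilon$-DP to pairs of auxiliary variables and average, concluding $\hat g^\prime_i \geq e^{-\varepsilon}\hat g_i$. The only differences are cosmetic --- you compare all $(j,k)$ pairs where the paper pairs each $j$ with itself, and you spell out the passage from the set-based definition \eqref{eq:dp_def} to the density inequality, which the paper takes for granted.
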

\begin{proof}

    The proof follows by  observing that for any pair $(\tilde Y^\prime_{i, j}, \tilde Y_{i, j})$, $\varepsilon$-DP imples 
    $
        q_i(Z_i \mid \tilde Y^\prime_{i, j}) > e^{-\varepsilon }q_i(Z_i \mid \tilde Y_{i, j})
    $, so that 
    $
        \sum_{j=1}^m (q_i(Z_i \mid \tilde Y^\prime_{i, j}) - e^{-\varepsilon }q_i(Z_i \mid \tilde Y_{i, j})) > 0.
    $
    Then, noting that
    $
        \frac{\hat g^\prime_i}{\hat g_i} = \frac{\sum_{j=1}^m q_i(Z_i \mid \tilde Y^\prime_{i, j})}{\sum_{j=1}^m q_i(Z_i \mid \tilde Y_{i, j})}
    $,
    we immediately have our result.
\end{proof}
In practice, we observe much larger acceptance rates than this theoretical lower-bound. 
Moreover, we also empirically observe that for $(\varepsilon, \delta)$-DP channels and $\rho$-ZCDP channels the acceptance rates are satisfactory, e.g., larger than $10\%$.

{
\spacingset{1.0}
\SetNlSty{textbf}{[}{]}
\begin{algorithm}[t!]
	\textbf{Input}:{ Sanitized data $Z_1, \ldots, Z_n$.}
 
    \textbf{Initialize}:{ $c_1, \ldots, c_n$, the unique values $\theta^*_1, \ldots, \theta^*_k$, and the $Y_i$'s}
	\DontPrintSemicolon

    \For{each MCMC iteration}{

        \For{$i = 1, \ldots, n$} {
        
            Sample $c_i^\prime \in \{1, \ldots, k+1\}$ with
            $
                \prob(c_i^\prime = h) \propto \begin{cases}
                    n^{-i}_h & \text { if } 1 \leq h \leq k \\
                    \alpha & \text { if } h = k+1
                \end{cases}
            $,  \\
            where $n^{-i}_h$ is $n_h$ computed when observation $i$ is removed from the state.

            \textbf{if} $c_i^\prime = k+ 1$, sample $\theta^*_{k+1} \sim G_0$
            
            Sample $\{\tilde Y_{i, j}^\prime\} \sim f(\cdot \mid \theta^*_{c_i^\prime})$. 

            With prob.\ $\min\left\{1; \hat g^\prime_i / \hat g_i \right\}$ set $(c_i, \{\tilde Y_{i, j}\}) = (c_i^\prime, \{\tilde Y_{i, j}^\prime\})$, else revert all changes.
        }

        \For{$h=1, \ldots, k$} {
            Sample $\theta^*_h$ from the density 
            $
                p(\theta^*_h \mid \cdots) \propto \prod_{i: c_i = h} f(\{\tilde{Y}_{i,j}\} \mid \theta^*_h) G_0(\dd \theta^*_h)
            $
        }
    }
	
	\textbf{end}
	\caption{\label{algo:neal5} Private Neal 5 algorithm}
\end{algorithm}
}

\vspace{-1.1em}
\subsection{Conditional MCMC sampling: private slice-sampling}\label{sec:slice_algo}

The development of a private version of the slice-sampling of \cite{kalli2011slice} is straightforward. From the model \eqref{eq:bnp_mix} with a Dirichlet measure $\mathscr Q$, we write the joint distribution
\begin{multline}\label{eq:slice_joint}
    \prob(Z \in \dd z, Y \in \dd y, \theta \in \dd \theta,  P \in \dd \tilde p) = \left\{\prod_{i=1}^n q_i(z_i \mid y_i) \dd z_i \, f(y_i \mid \theta_i) \dd y_i \, \tilde p(\dd \theta_i) \right\}  \mathscr Q(\dd \tilde p),
\end{multline}
from which all the full-conditional distributions for the random variables involved can be derived. Note that $(P, \theta)$ is independent of $Z$ given $Y$. Therefore, with respect to the original slice-sampling of \cite{kalli2011slice}, we must only additionally sample the $Y_{i}$'s from their full-conditional distributions. From \eqref{eq:slice_joint} the conditional density of $Y_i$ is
$    f_{Y_i}(y \mid \text{rest}) \propto  q_i(z_i \mid y) f(y \mid \theta_i),$
which can be easily sampled by a standard Metropolis-Hastings step. Following~\cite{ju_data_2022}, we used $f(\cdot \mid \theta_i)$ as the proposal proposal distribution, giving in the same mixing guarantee of Proposition \ref{prop:lb} for $\varepsilon$-DP, and resulting in good empirical performance. Having sampled the $Y_i$'s, the updates of $P$ and $\theta$ 
proceed as in \cite{kalli2011slice}, disregarding the $Z_i$'s.
Other conditional algorithms like \cite{papaspiliopoulos2008retrospective}, \cite{ishwaran2001gibbs}, \cite{canale2022importance}, or \cite{arbel2017moment}
may also be considered.

\subsection{Algorithms for tractable marginalized kernels}\label{sec:marg_kernels}

Whenever the marginalized kernel $g_i(z \mid \theta) = \int_{\mathbb Y} q_i(z \mid y) f(y \mid \theta) \dd y$ has a tractable analytic form, we can integrate out the $Y_i$'s and sample from the posterior distribution of the marginal model \eqref{eq:marg_marg_mix}, or its equivalent formulation when $P$ is not marginalized out, by using standard algorithms. This is the case, for instance, of a Gaussian mixture model for the $Y_i$'s coupled with the Gaussian DP mechanism, or the case of the Binomial DP mechanism in \cite{chen22} paired with a mixture of Beta kernels. We refer to \cite{dimitrakakis2017differential} for other examples. 

As an example, we consider the case of Gaussian mixtures with a Gaussian DP mechanism for univariate data. 
After marginalizing out $Y_i$'s, we have that the kernel $g$ is
\begin{align*}
    g(\cdot \mid \theta_i) = \int_{\R \times \R_+} \mathcal N(\cdot \mid y, \eta^2) \mathcal N(y \mid \mu_i, \sigma^2_i) \dd y = \mathcal N(\cdot \mid \mu_i, \eta^2 + \sigma^2_i).
\end{align*}
Accordingly, the Bayesian model for the $Z_i$'s, with the $Y_i$'s marginalized out, is
\begin{equation}\label{eq:diff_gauss_marg1}
        Z_i \mid \mu_i, \sigma^2_i \ind \mathcal N(\mu_i, \eta^2 + \sigma^2_i), \quad
        \mu_i, \sigma^2_i \mid P  \iid P,  \quad 
        P \sim DP(\alpha G_0).
\end{equation}
The posterior distribution under \eqref{eq:diff_gauss_marg1} can be easily sampled. Note that standard choices for $G_0$, such as the Normal-Inverse-Gamma distribution, are not conjugate to $g(\cdot \mid \cdot)$, and one must resort to, e.g., Algorithm 8 in \cite{neal2000markov}. However, by means of a careful choice of the prior distribution, we show it is possible to marginalize out the unique values $(\mu^*_h, \tau^*_h)$, obtaining an MCMC algorithm that samples only the cluster allocations. Setting $\tau_i^2 = \eta^2 + \sigma^2_i$, which is a random variable supported on $[\eta^2, +\infty)$, we have
\begin{equation}\label{eq:diff_gauss_marg}
        Z_i \mid \mu_i, \tau^2_i  \ind \mathcal N(Y_i, \tau^2_i), \quad 
        \mu_i, \tau^2_i \mid P \iid P,  \quad
        P \sim DP(\alpha G_0).
\end{equation}
By choosing $G_0(\dd \mu, \dd \tau^2) = \mathcal N(\mu \mid \mu_0,\tau^2 / \lambda) IG(\tau^2 \mid a, b) \indicator_{[\eta^2, +\infty)}(\tau^2) \dd \mu \, \dd \tau^2$ we obtain a conjugate model. Now \eqref{eq:diff_gauss_marg} can be sampled by means of  Algorithm 3 in \cite{neal2000markov}, or by means of the split and merge algorithm in \cite{jain2004split}, which usually leads to a better mixing. Critical to this derivation is (i) the choice of an appropriate perturbation channel that allows for analytic convolution with respect to the mixture kernel, and (ii) the choice of $G_0$ that takes into account the mixture kernel as well as the privacy constraint. The main difference with the standard Gaussian mixture with a Normal-Inverse-Gamma centering measure is that the marginal distribution of data inside a cluster is not a Student t distribution. Nonetheless its density function can be computed by Bayes' formula.

\vspace{-.1in}
\section{Numerical Illustrations}\label{sec:num}
\vspace{-.1in}
We present two simulation studies to assess the performance of our marginal and conditional sampling schemes across different privacy mechanisms, privacy levels and sample sizes. 
To evaluate samplers, we monitor the effective sample size (ESS) of the number of clusters, i.e. the number of unique values across the $c_i$'s. We note that when varying the privacy level $\varepsilon$, the posterior distribution is itself changing, and it makes little sense to compare effective sample sizes across different posteriors. In this case, we monitor the acceptance rate of the update of the $Y_i$'s, allowing us to evaluate how tight the bound in~\Cref{prop:lb} is. We note this is not directly comparable across samplers, since for the marginal sampling scheme the $Y_i$'s are updated together with the $c_i$'s, while for the conditional sampling scheme these updates are done separately. 
Unless otherwise specified, chains are run for 100,000 iterations, discarding the first 50,000 as burn-in steps, and we report the median over 50 independent repetitions of the acceptance rates and ESS.

\subsection{Private Neal 5 algorithm versus private slice-sampling}\label{sec:ex_lap}

We simulate $n=50, 200, 500, 1000$ data points from a mixture of three equally weighted truncated Gaussian distributions on the interval $I = [-10, 10]$, with means $-5, 0, 5$ and variance equal to one. We sanitize data through the Laplace mechanism, for $\varepsilon = 1.0, 2, 5, 10, 50$. We model the confidential data with a location-scale mixture of Gaussians, i.e. $\theta_i = (\mu_i, \sigma^2_i)$ and $f(y \mid \theta_i)$ is the Gaussian density function with mean $\mu_i$ and variance $\sigma^2_i$.  The base measure $G_0(\mu, \sigma^2)$ is assumed to be the Normal-Inverse-Gamma distribution $\mathcal N (\mu \mid 0, 10 \sigma^2) \times IG(\sigma^2 \mid 3, 3)$. Note that despite the confidential data being compactly supported, for simplicity, we assume a mixture model with support on the whole set $\R$.

\begin{figure}[t]
    \centering
    \includegraphics[width=\linewidth]{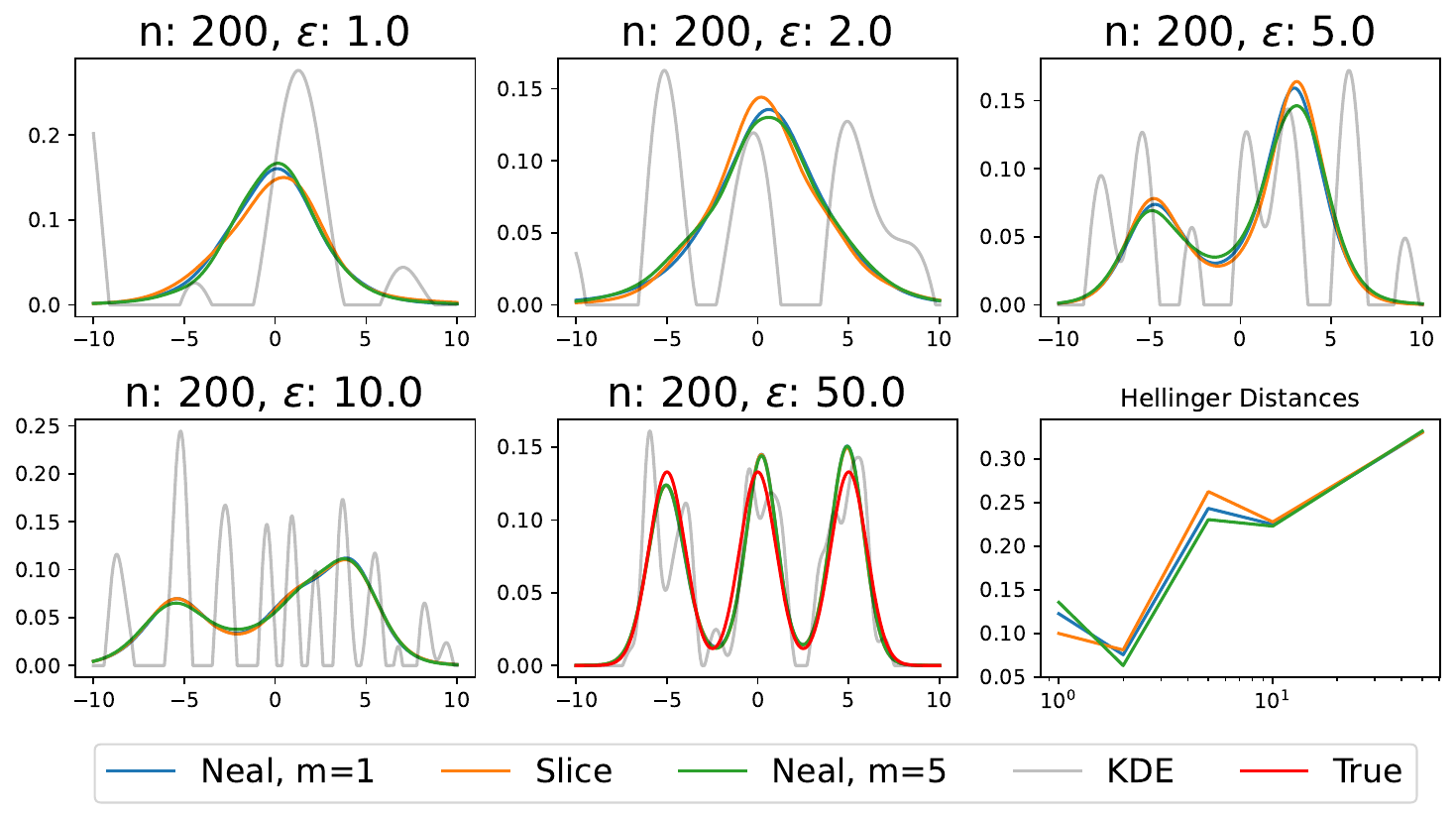}
    \caption{First five plots: density estimate for different privacy levels $\varepsilon$  when $n=200$. Last plot: Hellinger distance between posterior and prior mean density versus $\varepsilon$. (Section \ref{sec:ex_lap}).}
    \label{fig:laplace_dens_estimate_n200}
\end{figure}

We compare three proposed algorithms, the private Neal 5 algorithm with {$m=1$ auxiliary variables, with $m=5$ auxiliary variables}, and the private slice-sampler. \MBtext{We also consider a non-Bayesian alternative recently proposed in \cite{farokhi2020deconvoluting} which estimates the density function of the $Y_i$'s by the deconvolution kernel density estimator of \cite{stefanski1990deconvolving}. See also \cite{delaigle2004practical} and references therein. To compute the estimator we use the R package \texttt{decon} \citep{wang2011deconvolution}, which applies a bootstrap approach to select the optimal kernel bandwidth.}
Figure \ref{fig:laplace_dens_estimate_n200} shows the posterior density estimate for the $Y_i$'s when $n=200$, that is the posterior expectation of $\int f(\cdot \mid \theta)P(\dd \theta)$. 
\MBtext{The density estimates agree for all algorithms, and therefore in Figure \ref{fig:laplace_dens_estimate_full} in the Appendix we report only the Slice sampling algorithm and the associated 95\% pointwise credible bands (which can only be obtained by \emph{conditional} algorithms). }
We note how for strong privacy levels such as $\varepsilon = 1, 2$ \MBtext{(corresponding to large additive noise)}, the density estimate is poor, even with $n=1,000$ observations. For lower privacy levels, such as $\varepsilon = 50$, even $n=50$ observations are sufficient for reasonable density estimates. 
We arrive at similar conclusions when we focus on clustering performance, as measured by the Adjusted Rand Index (ARI) between the true and estimated clustering structure of the $Y_i$'s  (see Figure \ref{fig:lap_ari} in the appendix).
\MBtext{The deconvolution KDE performs rather poorly in all situations. 
One interesting difference between the Bayesian and frequentist estimates is that in high privacy regimes, the large amount of added noise results in the Bayesian estimates reverting to the prior predictive density, as shown by the last plot in Figure \ref{fig:laplace_dens_estimate_n200}. This gives an easy-to-check condition to assess the 
informativeness of the observed dataset under the privacy constraints.
On the other hand, without additional regularization, the frequentist estimates are all over the place when the noise is significant, and there is no such criterion to decide whether or not to trust the density estimates.}

\begin{figure}[t]
\begin{minipage}[c]{0.7\textwidth}
\centering
    \begin{subfigure}{\linewidth}
        \centering
        \includegraphics[width=\linewidth]{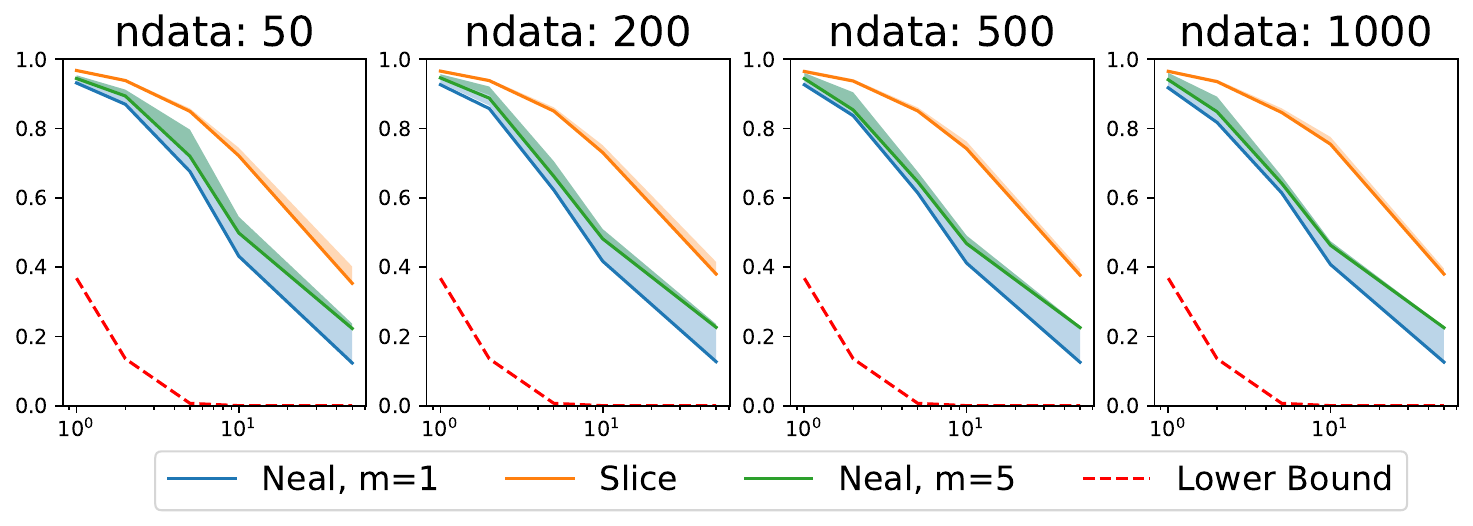}
    \end{subfigure}
    \begin{subfigure}{\linewidth}
        \centering
        \includegraphics[width=\linewidth]{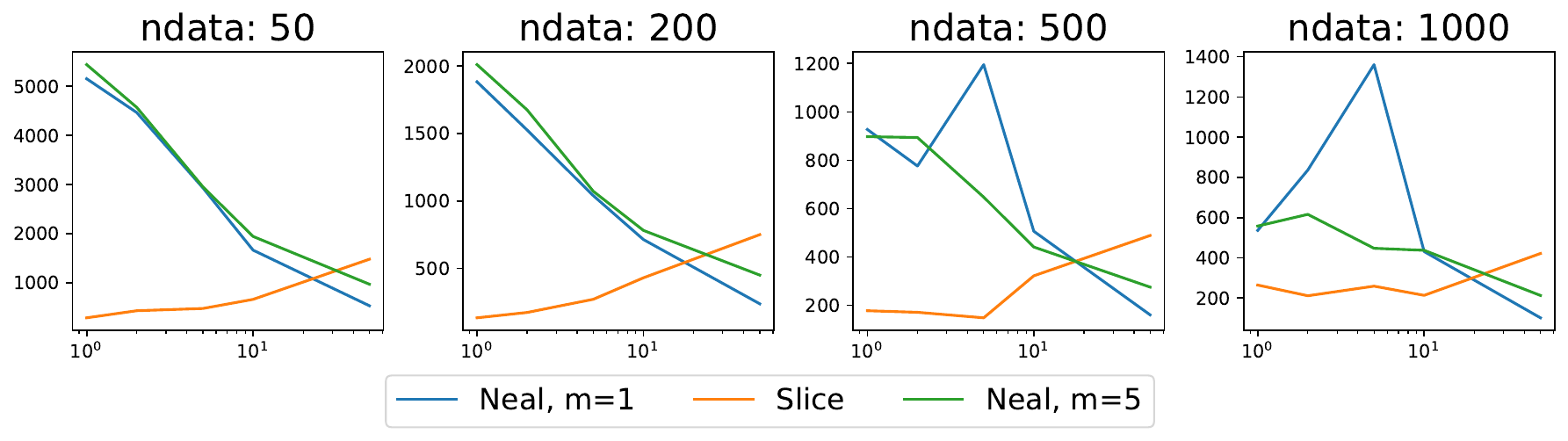}
    \end{subfigure}
    \end{minipage}
\begin{minipage}[c]{0.28\textwidth}
\vspace{-.1in}
    \caption{Top: acceptance rate as a function of the privacy level (the dashed line is the theoretical lower bound). 
    Bottom: effective sample size for the number of clusters.}
    \label{fig:laplace_ess}
    \end{minipage}
\end{figure}

Figure \ref{fig:laplace_ess} shows the median acceptance rate and ESS across the 50 replicates. As expected, the acceptance rates decrease with $\varepsilon$, but are still significantly larger than the theoretical lower bound $\exp(-\varepsilon)$ in \Cref{prop:lb}. For the private Neal 5 algorithm, the ESS for the number of clusters also decreases with $\varepsilon$, since the update of the cluster allocations is intimately tied to the update of the private observation. On the other hand, for the private slice-sampling, the ESS increases with $\varepsilon$, likely due to the decoupling of the sampling of the private data from the remaining of the parameters in the MCMC sampling. We also see that for lower values of $\varepsilon$, the private Neal 5 algorithm yields higher ESSs than the private slice-sampling. \MBtext{The use of $m=1$ or $m=5$ in the Neal algorithm does not significantly affect the ESS or the density estimate, though, as expected, increasing $m$ slightly reduces the variability of the estimates. However this increases the computational cost by almost a factor $m$.}
The runtimes are essentially unchanged when varying $\varepsilon$, and comparable for Neal 5 algorithm and the slice-sampler for all settings except when $n=1000$, when the slice-sampler is approximately ten times faster. The slice-sampler could be further accelerated by using parallel computations, since all the updates are easily parallelizable, in contrast to Algorithm \ref{algo:neal5} which is inherently sequential. 

\subsection{Laplace mechanism versus wavelet-based mechanism}\label{sec:ex_wav}

\begin{figure}[t]
    \centering
    \includegraphics[width=\linewidth]{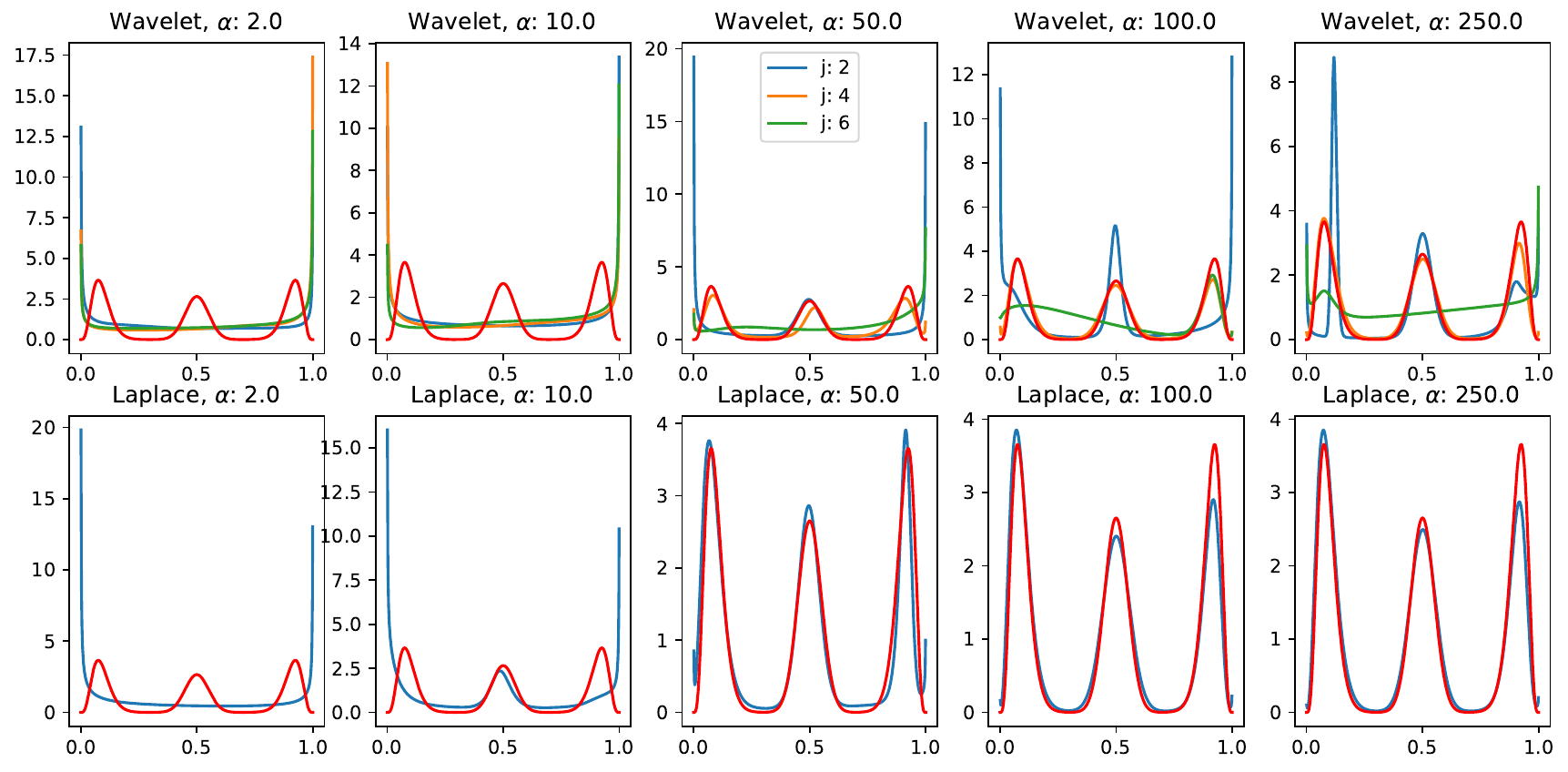}
    \caption{Density estimate for the simulated example in Section \ref{sec:ex_wav}. Top row: wavelet privacy mechanism. Bottom row: Laplace privacy mechanism.}
    \label{fig:wav_dens}
\end{figure}

\MBtext{In this section, we demonstrate the applicability of our ideas to more general privacy mechanisms, considering specifically the wavalet-based mechanism of \cite{butucea2020local}. See Appendix \ref{app:background} for further details on this mechanism. We compare posterior inference under this and the Laplace mechanism for the same privacy levels.}

We simulate $n=250$ data points from a mixture of three equally weighted Beta distributions with parameters $(5, 50)$, $(50, 50)$ and $(50, 5)$, respectively. We assume a Dirichlet process mixture of Beta kernels with parameters $a > 0 , b > 0$ for the $Y_i$'s, i.e.\ $f(\cdot \mid \theta) = \mbox{Beta}(\cdot \mid a, b)$, and set the base measure as $G_0(\dd a, \dd b) = \mbox{Gamma}(\dd a \mid 2, 2) \mbox{Gamma}(\dd b \mid 2, 2)$. 
See Appendix \ref{app:numerics} for further details. Below, we report posterior inference obtained using the private Neal 5 algorithm, though similar results hold for the private slice-sampler.

We consider privacy levels $\varepsilon = 2, 10, 50, 100, 250$ and, when adopting the wavelet-based mechanism, we consider $J= 2, 4, 6$. Figure \ref{fig:wav_dens} reports the posterior estimates of the density function, and Figure \ref{fig:wav_arate} in the Appendix the corresponding acceptance rates. We see that when $J=2$ the density estimate is rather poor: for $\varepsilon = 2, 10, 50$ the estimated density does not capture the three modes and pushes all the mass to the boundaries of the interval, when $\varepsilon=50$ it captures the mode in $1/2$ but still fails to properly capture the boundaries. When $J=6$, the acceptance rates are rather low when $\varepsilon \geq 100$. The density estimates are poor as well: for all values of $\varepsilon$ the estimated density is very different from the true one. When $J=4$ instead, the estimated density matches the true one for $\varepsilon \geq 50$. Similarly, when using the Laplace mechanism, the estimated densities match the true one for $\varepsilon \geq 50$.

\MBtext{Our experiments show that posterior estimates inherit the sensitivity of the wavelet-based mechanisms to their hyperparameters. As we vary $J$ between 2 and 6, we found that posterior density estimates are consistent with the true data generating distribution only when setting $J=4$. In contrast, when adding Laplace noise, density estimates are sensible.
}

\vspace{-.2in}
\subsection{Private Neal 5 versus non-private (marginal) algorithms}\label{sec:ex_gauss}

Consider the BNP mixture model \eqref{eq:bnp_mix} where $Q_i(\cdot \mid Y_i) = \mathcal N(\cdot \mid Y_i, \eta^2)$.
Following \cite{dwork_roth} we set $\eta = \sqrt{2 \log \frac{1.25}{\delta}} \frac{\Delta}{\varepsilon}$, but other choices can be considered to ensure $(\varepsilon, \delta)$-DP \citep{balle_wang, zhao2019gaussdp}.  We consider the same data generating process as in Section \ref{sec:ex_lap} and let $\delta = \{0.01, 0.1, 0.25\}$, $\varepsilon = \{5, 10, 25, 50\}$. For smaller values of $\varepsilon$, the variance $\eta^2$ is extremely large (e.g., when $\varepsilon = 0.5, \delta=0.01$, $\eta^2 > 3800$) thus making density estimation essentially impossible.
We simulate $n=250$ observations and compute posterior inference under model \eqref{eq:diff_gauss_marg} using either Neal's algorithm 2 or algorithm 3, which marginalizes w.r.t. the unique values $(\mu^*_h, \tau^*_h)$, and under model \eqref{eq:marg_marg_mix} using the private Neal 5 algorithm with $m=1$. 

Figure \ref{fig:gauss_dens_delta01} shows the posterior density estimates for $\delta=0.1$, see Figure \ref{fig:gauss_dens} in the Appendix for the other cases. 
\MBtext{The three algorithms produce essentially identical density estimates, which are rather poor for the smallest values of $\varepsilon$ and $\delta$ considered here, excellent when $\varepsilon = 50$ across all values of $\delta$ and somewhat satisfactory when $\varepsilon = 10, 25$.}
Figure \ref{fig:gauss_ess} instead shows the effective sample sizes of the number of clusters. Neal 2 and Neal 3 algorithms outperform the private Neal 5 algorithm, and Neal 3 algorithm is always better than Neal 2 algorithm. Such a behaviour was expected, since these algorithms marginalize over the $Y_i$'s and Neal 3 algorithm further marginalizes over the unique values. Moreover, while the effective sample size for the private Neal 5 algorithm decreases with the privacy level $\varepsilon$, as already observed in Section \ref{sec:ex_lap}, this does not happen for Neal 2 and Neal 3 algorithms.

\begin{figure}[t]
    \centering
    \includegraphics[width=0.9\linewidth]{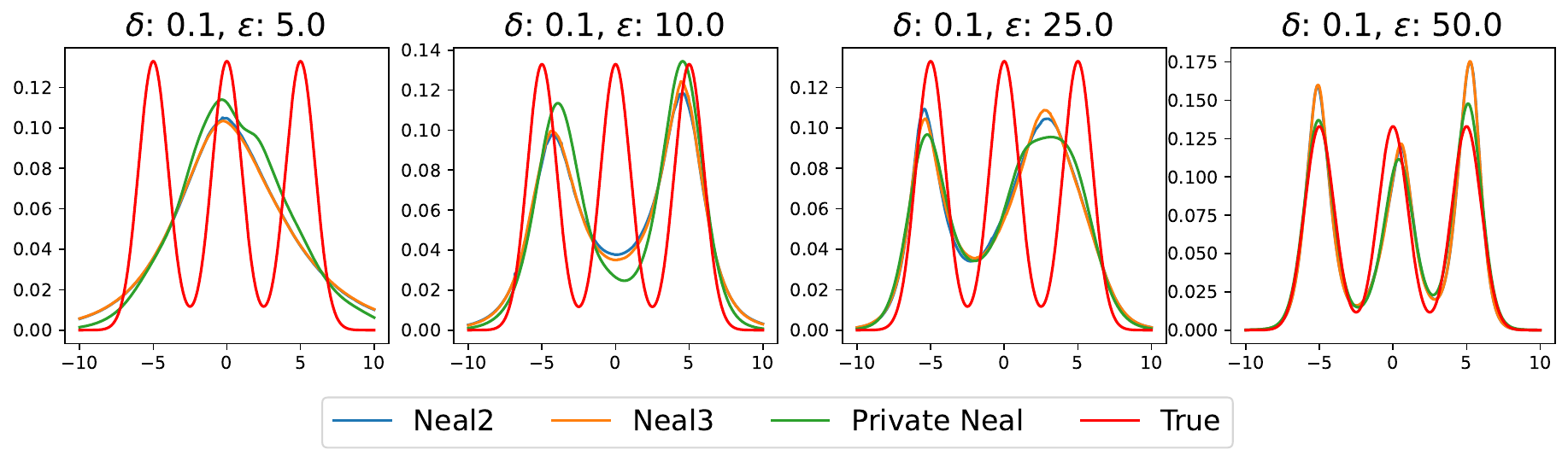}
    \caption{Density estimate for the example in Section \ref{sec:ex_gauss} for different $\varepsilon$ and $\delta=0.1$.}
    \label{fig:gauss_dens_delta01}
%
    \centering
    \includegraphics[width=0.8\linewidth]{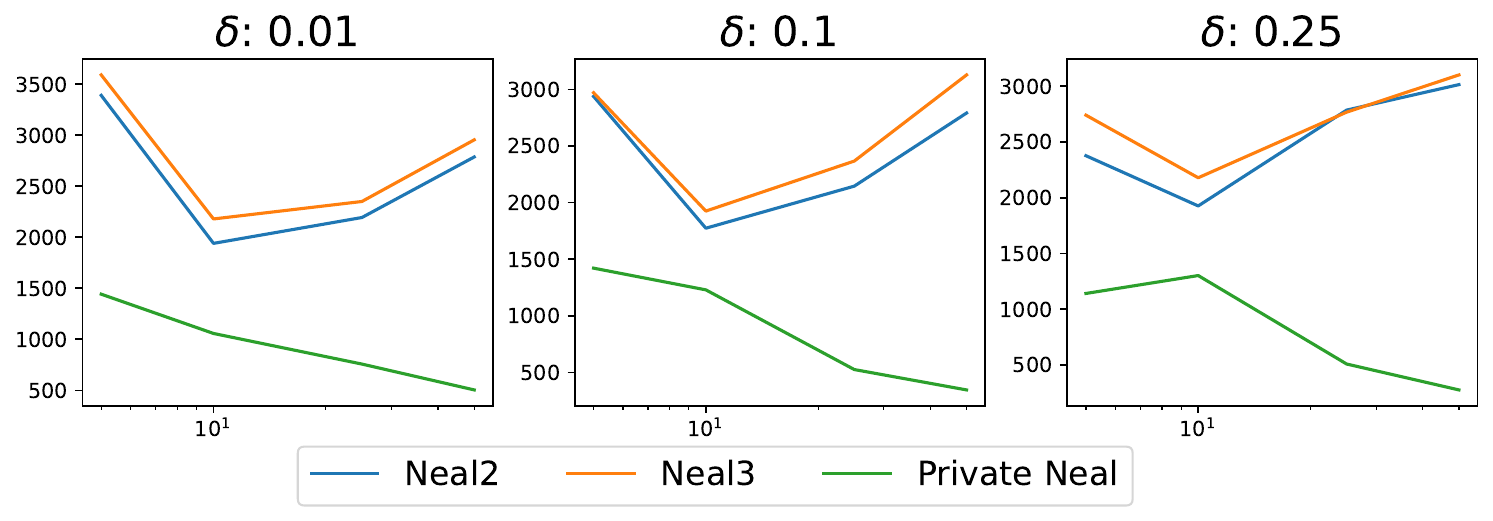}
    \caption{Effective sample size of the number of clusters for the example in Section \ref{sec:ex_gauss} }
    \label{fig:gauss_ess}
\end{figure}

\vspace{-.2in}
\section{BNP mixture modeling under global DP}\label{sec:global}
\vspace{-.1in}
We extend of our framework to the setting of global DP, whereby the privacy mechanism is not applied observation-wise but to the whole dataset $Y = (Y_1, \ldots, Y_n)$. $Y$ is now stored with a trusted data holder, who then releases the sanitized $W \sim Q(\cdot \mid Y)$ where $Q(\cdot \mid \cdot)$ is the privacy channel. Under $\varepsilon$-DP, $Q$ must satisfy
\[
    \sup_A \sup_{y, y^\prime: d_H(y, y^\prime) = 1} \frac{Q(W \in A \mid Y = y)}{Q(W \in A \mid Y = y^\prime)}   \leq e^{\varepsilon}
\]
where the first supremum is over all measurable sets and $d_H$ is the Hamming distance, i.e.\ $d(y, y^\prime) = |\{i: y_i \neq y_i^\prime\}|$. Extensions to $(\varepsilon, \delta)$-DP and $\rho$-zCDP are straightforward.

A few global privacy mechanisms for density estimation have been proposed in \cite{wasserman2010statistical}. We consider the ``sampling from a smoothed histogram'' mechanism, for $Y_i \in [0, 1]$ which releases an i.i.d. sample $W = (W_1, \ldots, W_k) \in \mathbb Y^k$ from the density 
\begin{equation}\label{eq:smooth_hist}
   f_{m, \delta}(x) := (1 - \delta) \sum_{j=1}^m \frac{c_j}{n} \indicator_{B_j}(x) + \delta, 
\end{equation}
where $\{B_1, \ldots, B_m\}$ is a partition of $[0, 1]$, $c_j = \sum_{i=1}^n I(Y_i \in B_j)$, and $n$, $m$, $\delta$ and $\varepsilon$ satisfy 
$    k \log \left(\frac{1 - \delta}{\delta} \frac{m}{n} - 1 \right) \leq \varepsilon.$
Other alternatives 
include directly releasing a perturbed histogram as proposed in \cite{dwork2006differential}, or the projection estimator in \cite{lalanne2023cost}.
Importantly, these perturbation mechanisms yield a channel $Q(\cdot \mid Y)$ whose density is easily computable. On the other hand, the ``sampling from a perturbed mechanism'' or the mechanism based on 
the truncated perturbed Fourier expansion proposed in \cite{wasserman2010statistical} yield to channels $Q(\cdot \mid Y)$ with intractable densities which makes posterior inference cumbersome.

Under the global DP setting for density estimation, the BNP mixture model is essentially identical to \eqref{eq:bnp_mix}, where we replace the first line with $W \mid Y \sim Q(\cdot \mid Y)$. A significant difference between the local DP setting and global DP setting is that global DP does not typically allow sensible inferences on the latent clustering, since $W$ does not preserve information about individual observations. 
\vspace{-.2in}
\subsection{Marginal and conditional MCMC sampling}\label{sec:global_algo}
We consider the joint distribution of data and parameters under global DP setting: 
\begin{multline}\label{eq:slice_global_joint}
    \prob(W \in \dd w, Y \in \dd y, \theta \in \dd \theta,  P \in \dd \tilde p) = Q(\dd w \mid Y) \left\{ \prod_{i=1}^n f(y_i \mid \theta_i) \dd y_i \, \tilde p(\dd \theta_i) \right\} \mathscr Q(\dd \tilde p).
\end{multline}
It is straightforward to extend the private slice-sampling to the global DP setting. The only difference from the local DP setting is that the conditional distribution of $Y_i$ is
\[
    f_{Y_i}(y \mid \text{rest}) \propto Q(\dd w \mid Y) f(y_i \mid \theta_i),
\]
from which we can sample by a standard Metropolis-Hastings step with $f(\cdot \mid \theta_i)$ as proposal, similarly to \Cref{sec:slice_algo}. Depending on the choice of the channel $Q$, this might become inefficient since evaluating the full-conditional density might scale as $\mathcal O(nk)$.  However, for the mechanism we consider, updating $Y_i$ changes at most one of the counts $c_j$ in \eqref{eq:smooth_hist} and thus we do not need to re-compute the histogram $f_{m, \delta}$ at every step. 

We can also design a marginal MCMC sampler for global DP, analogous to what we have done for local DP. In particular, by integrating out $P$ in \eqref{eq:slice_global_joint}, we obtain
\begin{equation*}
    \prob(W \in \dd w, Y \in \dd y, \theta \in \dd \theta) =  Q(\dd w \mid Y) \left\{ \prod_{i=1}^n f(y_i \mid \theta_i) \dd y_i \right\} m(\dd \theta),
\end{equation*}
where $m(\dd \theta) = \frac{\alpha^k}{(\alpha)_{(k)}}  \prod_{h=1}^k (n_h - 1)! G_0(\dd \theta^*_h)$ is the marginal distribution of a sample from a Dirichlet process. Taking inspiration from Algorithm \ref{algo:neal5}, here we propose to update $(Y_i, \theta_i)$ jointly. Writing  $\pi(\theta_i \mid \theta_{-i})$ for the conditional prior of $\theta_i$ given $\theta_{-i}$ as in \eqref{eq:xi_fullcond}, we obtain 
\begin{align*}
    & \prob(Y_i \in \dd y_i, \theta_i \in \dd \theta_i \mid \text{rest}) \propto Q(\dd w \mid Y) f(y_i \mid \theta_i) \dd y_i \pi(\theta_i \mid \theta_{-i}).
\end{align*} 
We sample from this using a Metropolis-Hastings step with a proposal $\prob(Y_i^\prime \in \dd y^\prime, \theta_i^\prime \in \dd \theta^\prime) =  \pi(\theta^\prime \mid \theta_{-i}) f(y^\prime \mid \theta^\prime_i) \dd y^\prime$ and acceptance probability
$    \min\left\{1, \frac{q(W \mid Y_1, \ldots, Y_i^\prime, \ldots, Y_n)}{q(W \mid Y_1, \ldots, Y_i, \ldots, Y_n)}\right\}$.
Here $q(\cdot \mid \cdot)$ is the density of the channel $Q$. Further considerations of marginal MCMC sampling schemes under global DP follow along the same lines discussed for local DP.

In Appendix \ref{app:global_simu}, we report a simulation where we compare the posterior inference under the privacy mechanism in \ref{eq:smooth_hist} using the two MCMC algorithms described above. We obtain the same conclusions of \Cref{sec:ex_lap}, namely both algorithms present much higher acceptance rates than the theoretical lower bound. The density estimates show a behavior similar to what is observed for the wavelet-based perturbation: one carefully needs to choose the hyper-parameters involved in $Q$ to get good estimates.

\vspace{-.2in}
\section{Real data analysis}\label{sec:blood_donors}
\vspace{-.1in}
We consider a dataset of blood donors at the Milano Department of the Associazione Volontari Italiani del Sangue (AVIS) between January 1st, 2010 and May 15th, 2016. Each data point is the log of the time between the first and second blood donation of one of $n=100$ individuals, so that $Y_i = \log(\#$ {number of days between first and second donation$)$.
Due to health concerns, individuals are not allowed to donate blood for 90 days after their donation. Moreover, the timespan in which data are collected ensures a maximum value for $Y_i$. Together, these imply $Y_i \in [4.49, 7.6]$, so that $\Delta = 3.11$. We refer to \cite{argiento2022clustering} for further analyses on the blood donors data.

We analyze this dataset under zero concentrated differential privacy, fixing a privacy loss budget $\rho = 17.8$ following the US Census Bureau standard for data referring to individuals \citep{Garfinkel2022Differential}. To obtain sanitized data, we consider \MBtext{two locally differentially private mechanisms, namely the Laplace mechanism (with scale $\Delta / \sqrt{2 \rho}$) and the Gaussian mechanism (with variance $\Delta^2 / (2 \rho)$). The resulting perturbed data are displayed in Figure \ref{fig:blood-pert} in the Appendix.}
We assume a Gaussian mixture model for the $Y_i$'s, with $G_0$ the Normal-Inverse-Gamma distribution: $G_0(\dd \mu, \dd \sigma^2) = \mathcal N(\dd \mu \mid \mu_0, 10 \sigma^2) \times IG(\dd \sigma^2 \mid a, b)$. The hyperparameters for $G_0$ are chosen in an empirical Bayes fashion, with $\mu = n^{-1} \sum_{i=1}^n Z_i$, and $(a, b)$ are chosen so that $\E[\sigma^2]$ is equal to one half of the empirical variance of the $Z_i$'s and $\mbox{Var}[\sigma^2] = 0.5$.
When using the Gaussian mechanism, we consider the private Neal 5 algorithm, the private slice-sampling, and the private Neal 2 and Neal 3 algorithms as discussed in \Cref{sec:marg_kernels}. When using the Laplace mechanism, we consider the private Neal 5 algorithm and the private slice-sampling. \MBtext{We also compare the output of the frequentist deconvolution estimator from Section \ref{sec:ex_lap}.}


The density estimates are reported in Figure \ref{fig:blood_dens} (top). The Bayesian estimates look essentially identical across both perturbation mechanisms and for any of the proposed algorithms, reflecting the importance of explicitly modeling the confidential data as well as the privacy-mechanism. Instead, while the deconvolution kernel density estimator  looks similar to the Bayesian estimate under the Gaussian mechanism, under the Laplace mechanism, it seems to be overfitting the data. We also compare the posterior of the mixture density given the privatized data and given the true non-sanitized data. Figure \ref{fig:blood_dens} (bottom) shows the posterior density estimates and the 95\% pointwise credible bands. It is clear how the density estimates are essentially identical when considering the privatized or the original dataset. On the other hand, as one would expect, the credible bands are thinner when considering the original observations.

\begin{figure}
\begin{minipage}[c]{0.62\textwidth}
\centering
        \includegraphics[width=\linewidth]{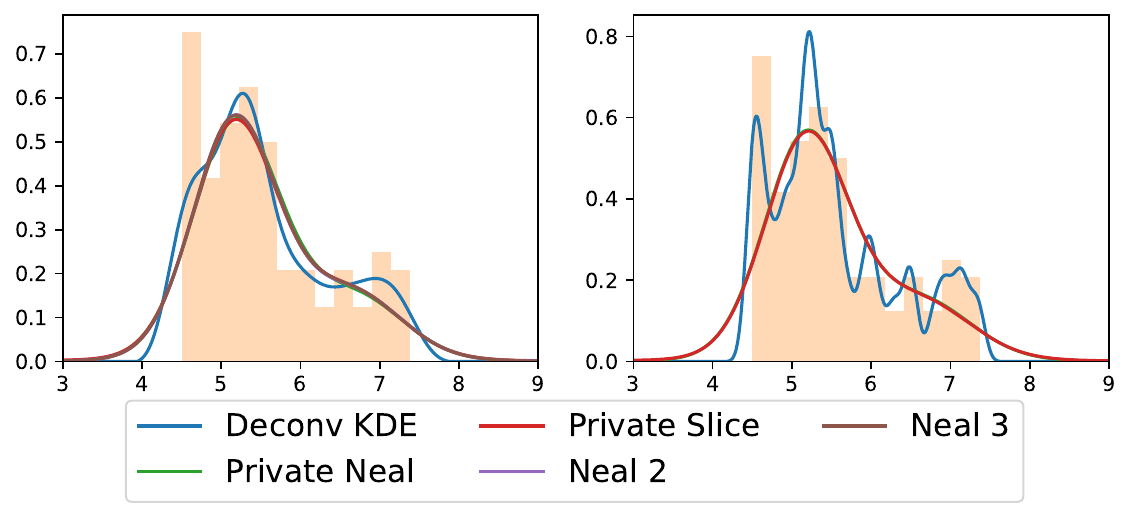}
        \includegraphics[width=\linewidth]{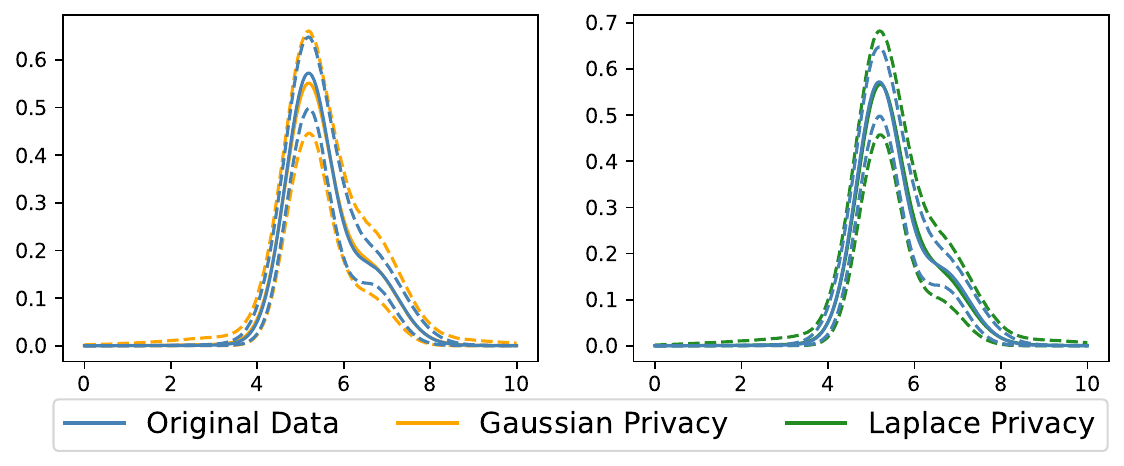}
\end{minipage}
\begin{minipage}[c]{0.37\textwidth}
    \caption{Density estimates for the AVIS dataset in \Cref{sec:blood_donors}. 
    Left plots: Gaussian perturbation. Right plots: Laplace perturbation.
    The top row shows posterior point estimates and the deconvolution KDE estimator.
    The bottom row shows the point estimates and pointwise 95\% credible bands. The histograms refer to the non-privatized data $Y_i$.}
    \label{fig:blood_dens}
\end{minipage}
\end{figure}

\vspace{-.2in}
\section{Discussion}\label{sec:disc}
\vspace{-.1in}
We considered the problem of Bayesian density estimation under differential privacy, providing a marginal and a conditional algorithm for posterior inference.
These algorithms are amenable to any perturbation mechanism that has a tractable density.
We also showed how in some cases, a careful choice of perturbation mechanism, mixture kernel, and base measure allow for several latent parameters to be marginalized out,  obtaining more efficient algorithms.
While we have focused on Dirichlet process mixtures  in the exposition, our algorithms are general regarding both the mixing measure and the mixture kernels, and using the \texttt{BayesMix} library, it is already possible to consider different modeling choices.

\MBtext{Compared to frequentist alternatives for differentially private density estimation, our proposal inherits the flexibility of the Bayesian framework, providing a natural way to quantify uncertainty and allowing easy extension to more complex models that embed our current model. 
Moreover, as a byproduct of our approach, we also obtain posterior cluster estimates.
From our simulations, it is clear that when too much noise is injected into the data, density estimation and clustering become infeasible.

Our algorithms are not specific to the setting of differential privacy, and can be applied to general deconvolution problems.
In particular, in the case of the Laplace mechanism we recover the setting in \cite{rousseau_wasserstein_2021}, where theoretical guarantees on the convergence of the posterior distribution are provided. Our examples in Section \ref{sec:ex_lap} can be seen as an empirical confirmation of their theoretical work.
As mentioned in~\citet{ju_data_2022}}, the efficiency of our algorithms are linked to the special structure of differential privacy, and it is interesting to see if other deconvolution problems possess structure that can be exploited similarly.
In the other direction, extending the analysis in \cite{rousseau_wasserstein_2021} to other kind of privacy mechanisms is a challenging and interesting task.
Another question that is worth pursuing is to more formally study the effect that the inclusion of a privacy mechanism has on the mixing time of the MCMC algorithms.

\section*{Acknowledgements}

Mario Beraha acknowledges the support by MUR, grant Dipartimento di Eccellenza 2023-2027.
Mario Beraha and Stefano Favaro received funding from the European Research Council (ERC) under the European Union’s Horizon 2020 research and innovation programme under grant agreement No 817257. Stefano Favaro gratefully acknowledges the financial support from the Italian Ministry of Education, University and Research (MIUR), “Dipartimenti di Eccellenza” grant 2018-2022.

\bibliographystyle{chicago}
\bibliography{privacy}

\appendix

\section{Background Material}\label{app:background}

\subsection{Wavelet-Based Perturbation and Estimators}

The wavelet-based perturbation mechanism of \cite{butucea2020local} considers a collection of wavelet functions
\[
    \psi_{j, k}(x) = \begin{cases}
    \varphi(x - k) \quad &\text{if } j=-1 \\
    2^{j/2}\psi(2^j x - k ) \quad &\text{if } j\geq 0,
    \end{cases}
\]
with $\varphi$ and $\psi$ the father and mother wavelet, respectively. We assume that the $Y_i$'s, $\varphi$, and $\psi$ are supported on $[0,1]$, though any other compact interval of $\mathbb{R}$ is applicable.  \cite{butucea2020local} consider $Z_i = (Z_{i, j, k})$ for $j=-1, \ldots, j_1$ and $k \in \mathbb Z$ such that
\begin{equation}\label{eq:wav_priv_full}
    Z_{ijk} \mid Y_i \sim \mathcal{L}(\psi_{j, k}(Y_i), s_j)
\end{equation}
for a suitable choice of the parameter $s_j$. They then consider the density estimator
\[
    \hat f(x) = \sum_{j=-1}^{j_1} \sum_{k \in \mathbb Z} \hat \beta_{jk} \psi_{j,k}(x), \quad \hat \beta_{jk} = \sum_{i=1}^n Z_{i, j, k},
\]
where $j_1$ controls the complexity of the estimator, proving that $ \hat f(x)$ achieves the minimax optimal rate for the $L_p$ loss over certain classes of densities. Because of  the compact support of the wavelet functions, for a fixed $j$, $\psi_{j, k}$ is not identically zero over $[0, 1]$ if and only if $k \leq 2^j$. 
In our work, we will consider \eqref{eq:wav_priv} where $\psi_{j, k}$ is the Haar wavelet: $\varphi(x) = \indicator_{[0, 1]}(x)$ and $\psi(x) = \indicator_{[0, 1/2)}(x)  - \indicator_{[1/2, 1)}(x)$. Under these assumptions, $\varepsilon$-DP is achieved if $s_{-1} = 12 / \varepsilon$ and $s{j} = \frac{12}{\varepsilon} \frac{\sqrt{2}}{\sqrt{2} - 1} 2^{j_1/2}$, see \cite[Equation 3.1]{butucea2020local}.
Note that $\psi_{-1, k}(x) \equiv 1$ for any $x \in [0, 1]$ so this is effectively useless and we will consider only $\psi_{j, k}$ for $j \geq 0$.

\section{Further Details on the Experiments}\label{app:numerics}

\subsection{Laplace versus Wavelet}

In this example, the mixture kernel $f(\cdot \mid \theta)$, $\theta=(a, b)$ is not conjugate to the base measure $G_0$. The corresponding full-conditional of the cluster-specific values $\theta^*_h = (a^*_h, b^*_h)$ is
\[
    f(a^*_h, b^*_h \mid \text{rest}) \propto \left(\frac{\Gamma(a^* + b^*)}{\Gamma(a^*) \Gamma(b^*)}\right)^{n_h} \prod_{i: c_i = h} y_i^{a^*_h - 1} (1 - y_i)^{b^*_h - 1} a^*_h b^*_h e^{- 2(a^*_h + b^*_h)}.
\]
To sample from the distribution above,  we employ a Metropolis-adjusted Langevin algorithm step with step-size $0.05$ on the unconstrained parameters $(\log a^*_h, \log b^*_h) \in \R^2$. 

As for the initialization, when using the Laplace privacy mechanism we simply initialize the $Y_i$'s at random on $[0, 1]$. Instead, in the case of the wavelet-based mechanism, we take advantage of the multi-resolution structure of the Haar wavelet basis. Indeed observe that $\varphi_{j, k}(x) > 0$ if and only if $x \in [2^{-j} k, 2^{-j} (k + 1/2)]$, and $\varphi_{j, k}(x) < 0$ if and only if $x \in [2^{-j} (k + 1/2), 2^{-j} (k + 1)]$. Therefore, we consider the last $2^{j_1}$ elements (i.e., the ones associated to the partition of the domain at the highest resolution) in $Z_i$ and take $k^*$ as the index associated to the maximum absolute value of such elements. Then, we initialize the $Y_i$'s by sampling from a Uniform distribution on $[2^{-j} k, 2^{-j} (k + 1/2)]$ or on  $[2^{-j} (k + 1/2), 2^{-j} (k + 1)]$ depending on whether the corresponding $Z_i$'s were positive or negative.

\section{Numerical Illustrations for the Global Privacy Mechanism}\label{app:global_simu}

We consider the privacy mechanism in \eqref{eq:smooth_hist}. We generate $n=250$ data points from a mixture of three beta distributions as in Section \ref{sec:ex_wav}. 
We consider privacy levels $\varepsilon = 2, 10, 50, 100, 250$ and set the parameters $m, k$ and $\delta$ following Theorem 4.3 in \cite{wasserman2010statistical}, specifically $m = L \lfloor n^{1/5} + 1\rfloor$ for three choices of $L=2, 4, 6$, $k = \lfloor n^{3/5} + 1 \rfloor$ and find $\delta$ numerically as the smallest number in $(0, 1)$ satisfying 
\[
  k \log \left(\frac{1 - \delta}{\delta} \frac{m}{n} - 1 \right) \leq \varepsilon.
\]
The base measure $G_0$ is fixed as in Section \ref{sec:ex_wav} as $G_0(\dd a, \dd b) = \mbox{Gamma}(\dd a \mid 2, 2) \mbox{Gamma}(\dd b \mid 2, 2)$. 

We sample from the posterior distribution using the two algorithms described in \Cref{sec:global_algo}.
Although not reported here, the mixing of the MCMC chain shows the same behavior as in the numerical illustration for the locally differentially private mechanism. Namely, the acceptance rates decrease with the level of privacy $\varepsilon$ for both algorithms, while being consistently  above the theoretical lower bound of $\exp(- \varepsilon)$. The effective sample sizes for the number of clusters is a decreasing function of $\varepsilon$ for the marginal algorithm, while it is essentially constant for the conditional one.
\Cref{fig:global_dens_est} shows the density estimates, which are mostly in accordance for both algorithms.
Similarly to the case of the wavelet-based mechanism, it is evident here that the parameters involved in the privacy mechanism must be carefully chosen, and it is not enough to simply satisfy their asymptotic growth rates. Indeed for $L=2$ and $L=6$ the density estimates are poor, with most of the mass being pushed to the borders and the remaining being spread uniformly through the domain.
When $L=2$ this is likely due to the number of bins $m$ being too small, so that the resulting histogram looks uniform, when $L=6$ the number of bins is significantly larger but $\delta$ is large as well which makes the resulting histogram rather uniform.
For $L=4$ the densities are rather satisfactory, but not as good as the ones with the locally differentially private mechanisms in \Cref{sec:ex_wav}. In all cases, the posterior assigns a sigificant mass to the boundaries of the domain. 
This behavior can be partially contrasted with the choice of a more concentrated base measure. See Figure \ref{fig:global_dens2} where we report the density estimates when  $G_0(\dd a, \dd b) = \mbox{Gamma}(\dd a \mid 90, 2) \mbox{Gamma}(\dd b \mid 90, 2)$.

\begin{figure}[t]
    \centering
    \includegraphics[width=\linewidth]{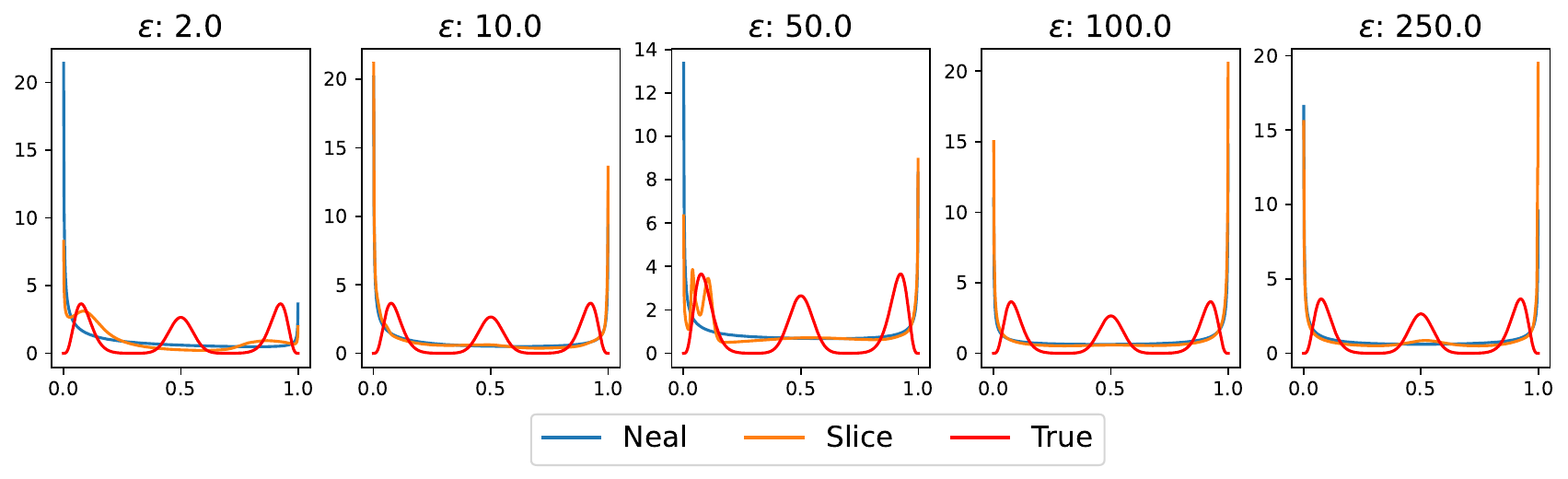}
    \includegraphics[width=\linewidth]{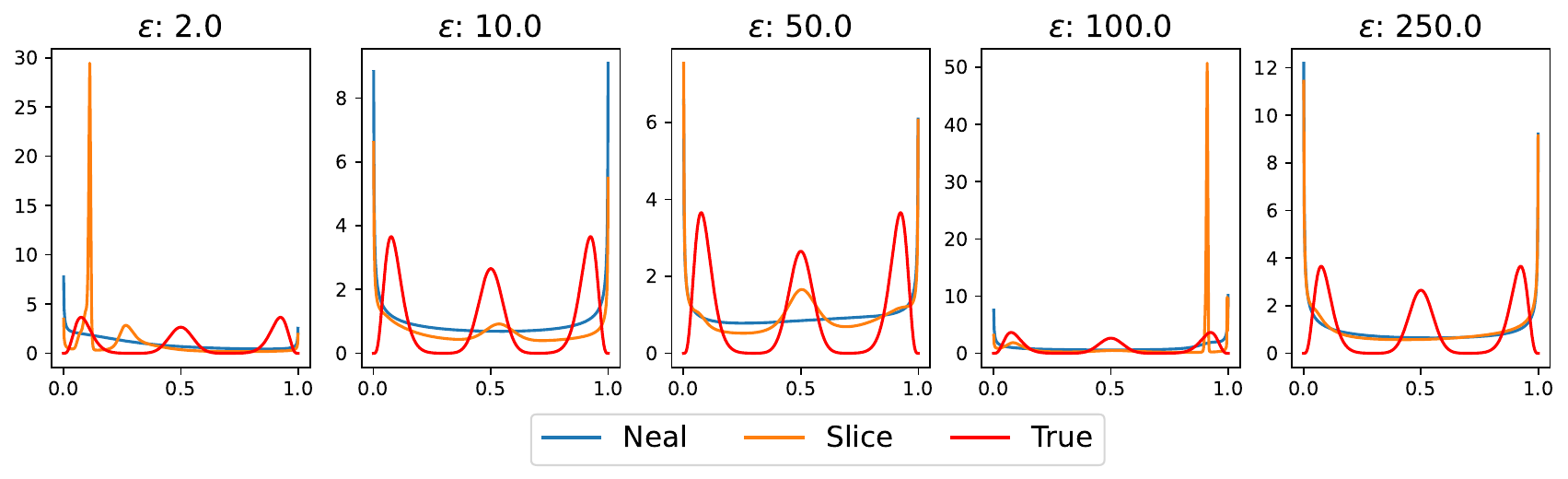}
    \includegraphics[width=\linewidth]{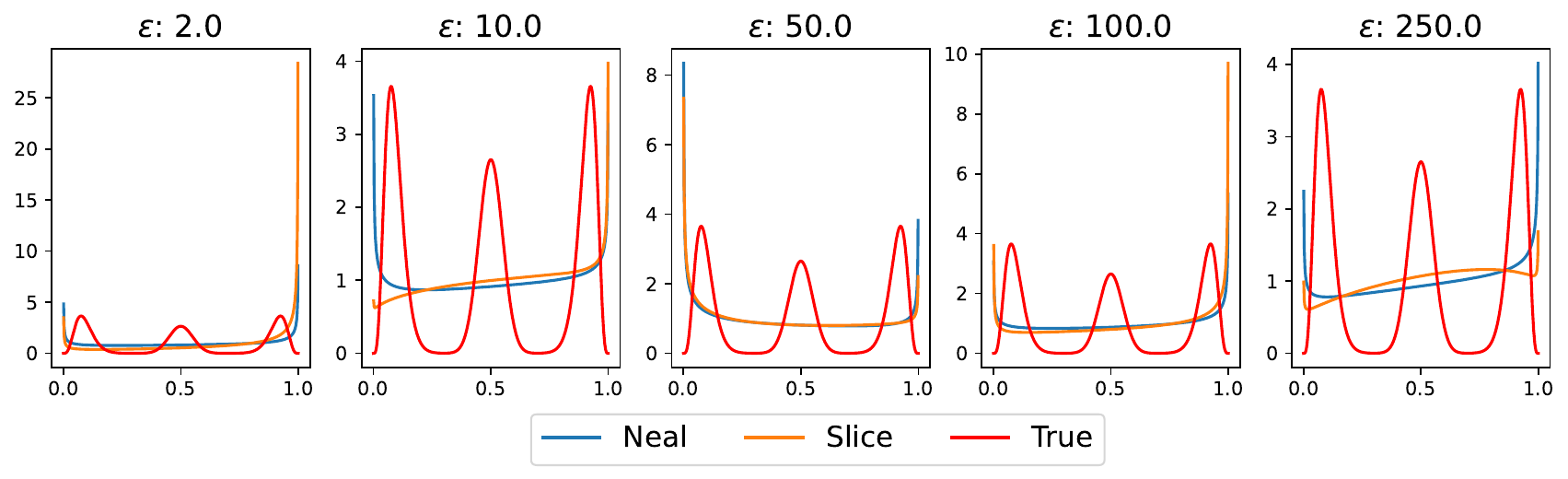}
    \caption{Density estimate for the simulated example in Section \ref{sec:global}. From top to bottom the parameter $L$ in $L=2, 4, 6$}
    \label{fig:global_dens_est}
\end{figure}

\begin{figure}[t]
    \centering
    \includegraphics[width=\linewidth]{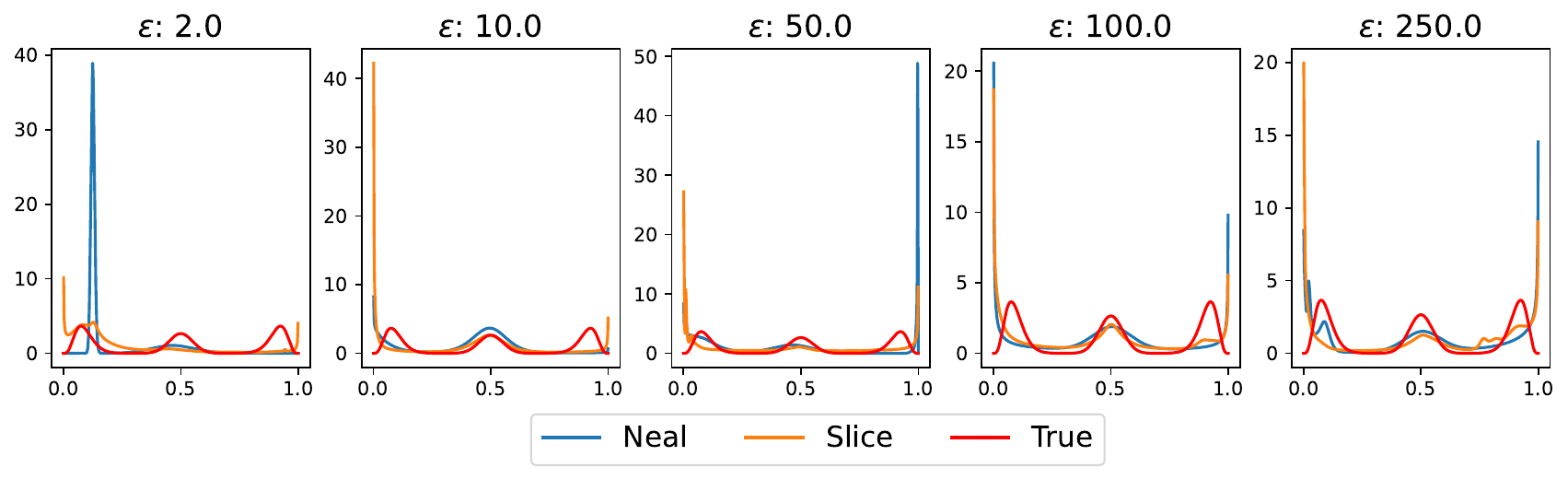}
    \includegraphics[width=\linewidth]{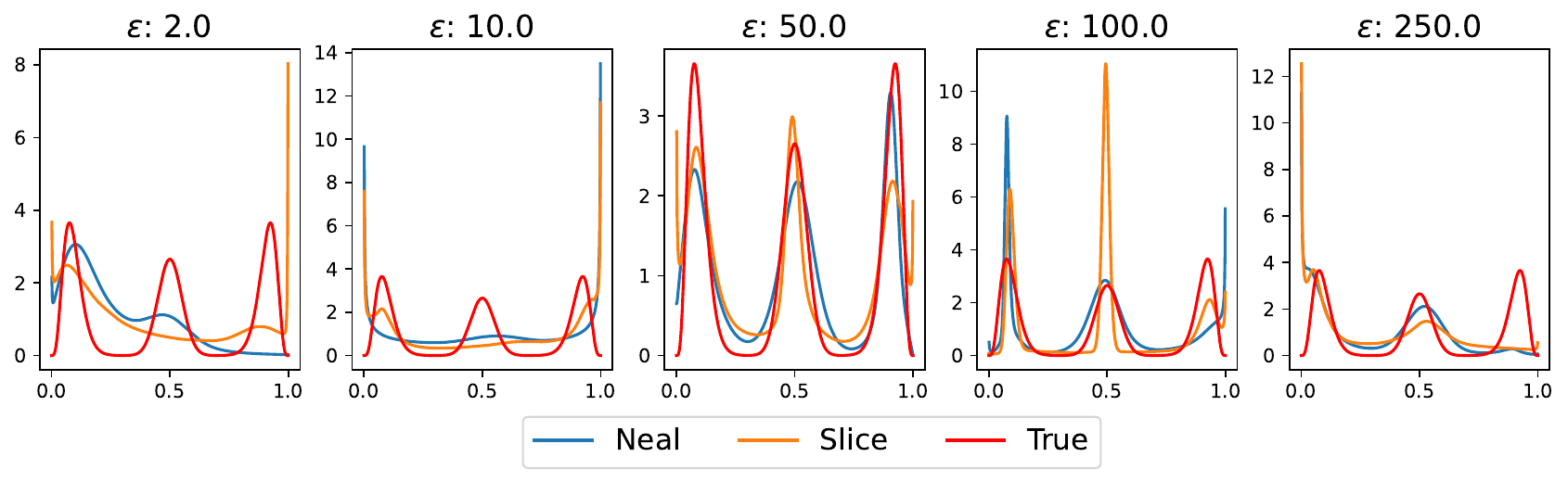}
    \includegraphics[width=\linewidth]{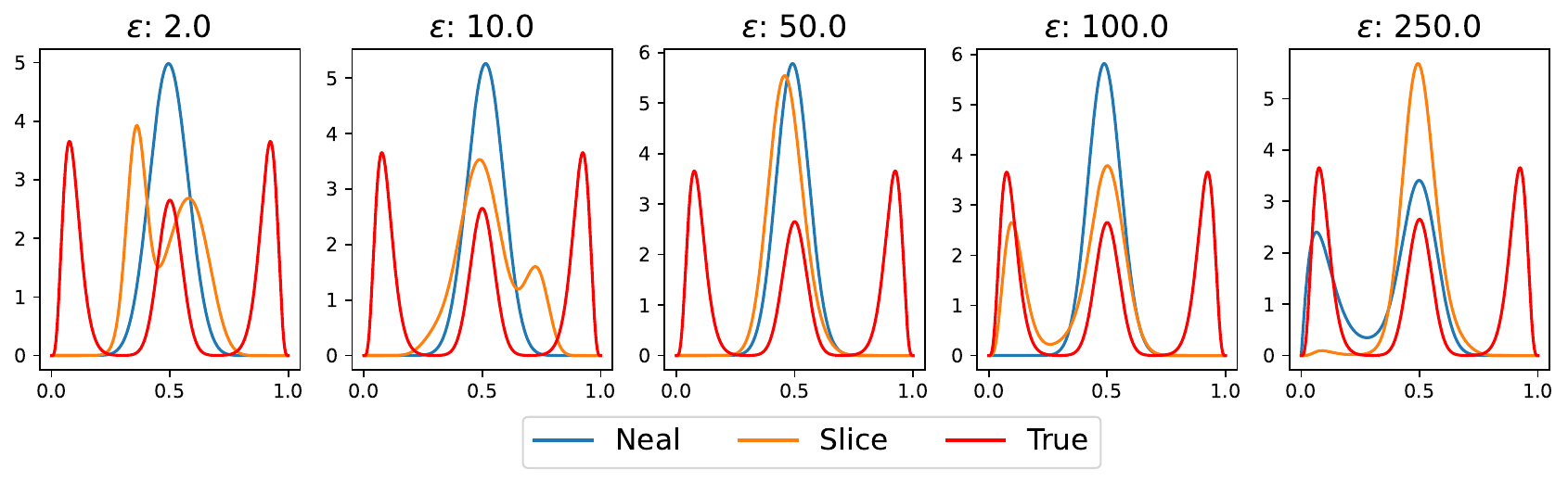}
    \caption{Density estimate for the simulated example in Section \ref{sec:global} when $G_0(\dd a, \dd b) = \mbox{Gamma}(\dd a \mid 90, 2) \mbox{Gamma}(\dd b \mid 90, 2)$. From top to bottom the parameter $L$ in $L=2, 4, 6$}
    \label{fig:global_dens2}
\end{figure}

\FloatBarrier

\section{Futher Plots}

Figure \ref{fig:laplace_dens_estimate_full} shows the density estimates for one particular instance of the simulated datasets in Section \ref{sec:ex_lap}.

Figure \ref{fig:lap_ari} shows the adjusted rand index between the true and estimated clustering for the simulation in Section \ref{sec:ex_lap}

Figure \ref{fig:lap_clus_hist} shows the posterior distribution for the number of clusters for the simulated example in Section \ref{sec:ex_lap}.

Figure \ref{fig:wav_arate} shows the acceptance rate for the simulated example in Section \ref{sec:ex_wav} 

Figure \ref{fig:global_dens2} shows the density estimates for the simulated example in Section \ref{sec:global} when setting the base measure $G_0(\dd a, \dd b) = \mbox{Gamma}(\dd a \mid 90, 2) \mbox{Gamma}(\dd b \mid 90, 2)$.

Figure \ref{fig:gauss_dens} shows the density estimates for one particular instance of the simulated datasets in Section \ref{sec:ex_gauss}.

\begin{figure}
    \centering
    \includegraphics[width=\linewidth]{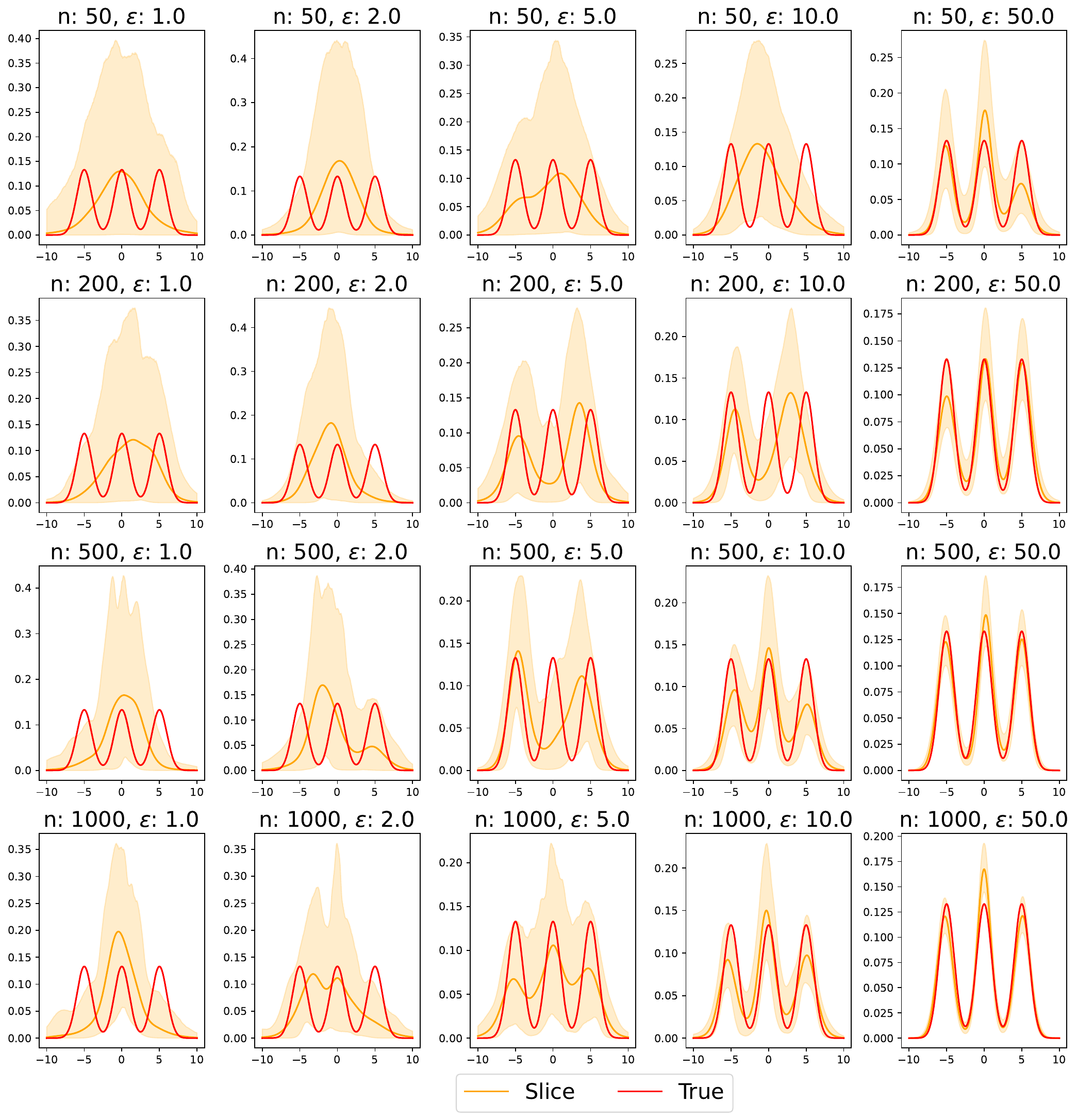}
    \caption{Density estimate for different values of the sample size $n$ and privacy levels $\varepsilon$ for the experiment in Section \ref{sec:ex_lap}. Red line is the true data-generating density, orange line is the posterior mean density obtained with the slice sampling algorithm, shaded orange region is the 95\% pointwise credible set.}
    \label{fig:laplace_dens_estimate_full}
\end{figure}

\begin{figure}
    \centering
    \includegraphics[width=\linewidth]{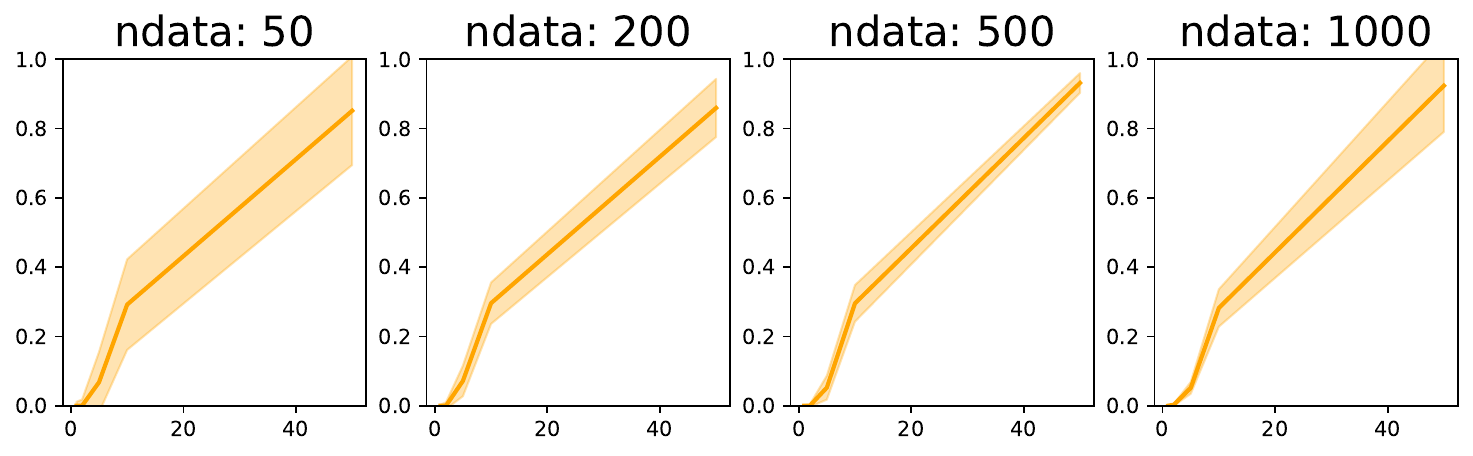}
    \caption{Adjusted Rand Index between estimated and true clustering for the experiment in Section \ref{sec:ex_lap}. Solid orange line is the median over 50 independent experiments, shaded region is a 95\% pointiwse confidence interval.}
    \label{fig:lap_ari}
\end{figure}

\begin{figure}
    \centering
    \includegraphics[width=\linewidth]{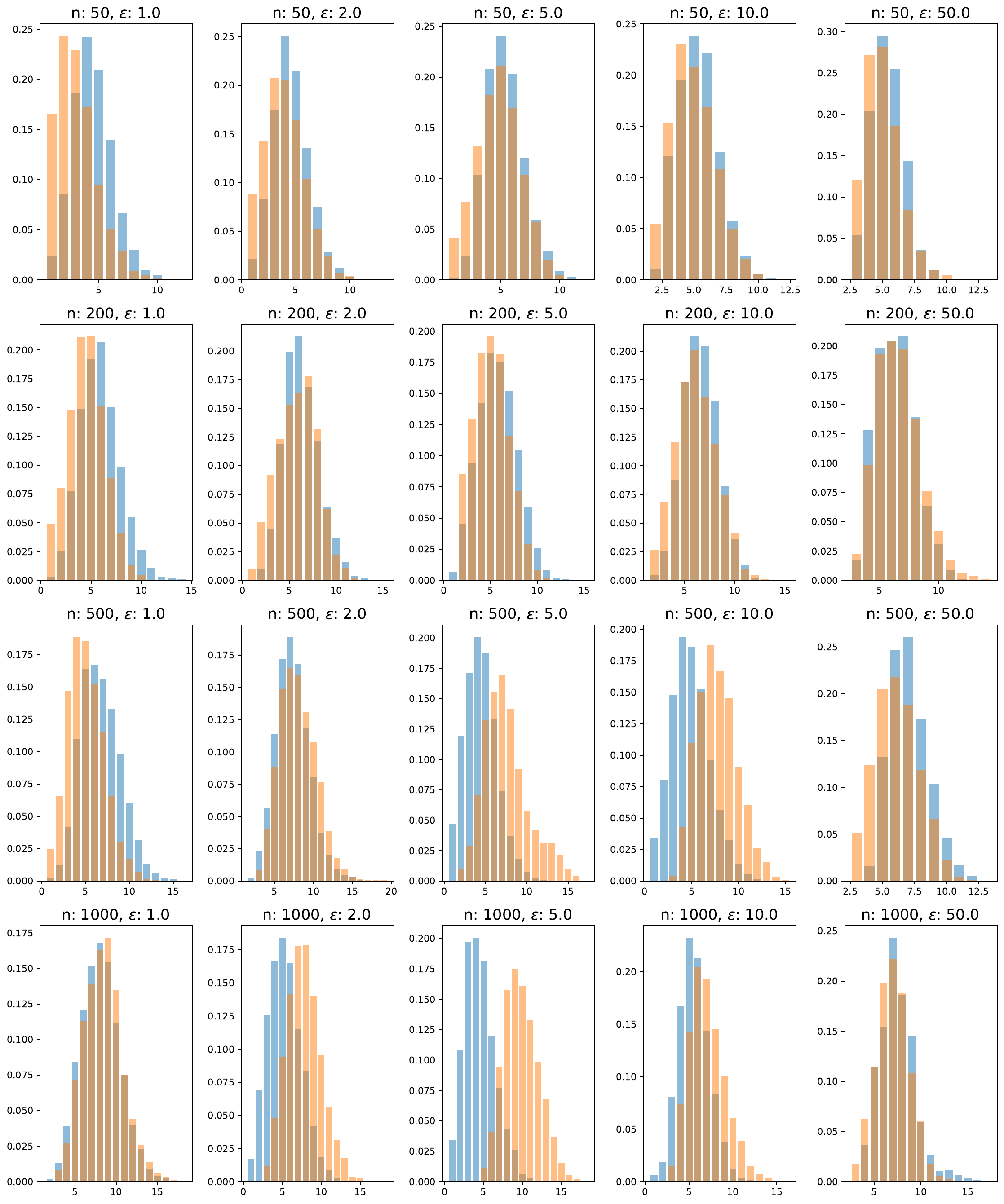}
    \caption{Posterior distribution of the number of clusters for the example in Section \ref{sec:ex_lap}. The blue and orange bars correspond to Neal's and the Slice algorithms, respectively. }
    \label{fig:lap_clus_hist}
\end{figure}

\begin{figure}
    \centering
    \includegraphics[width=0.6\linewidth]{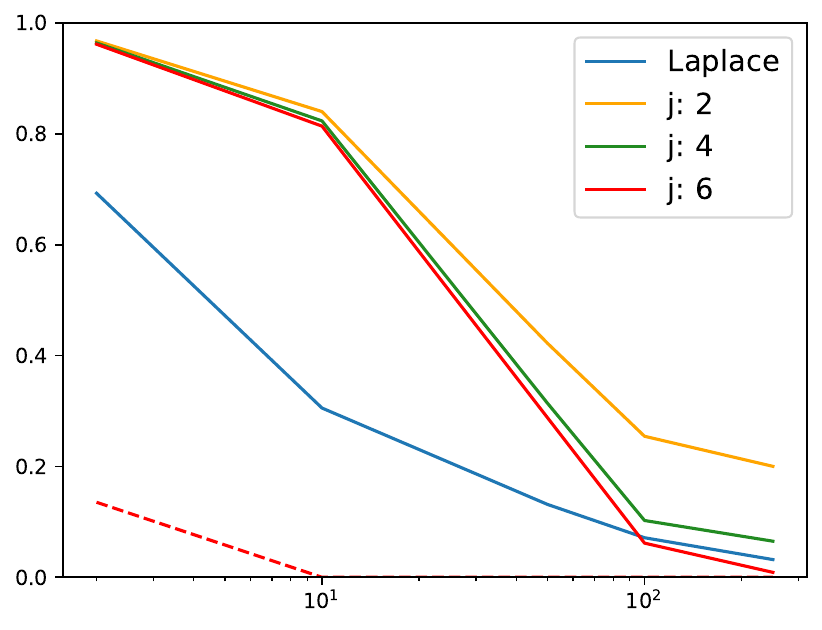}
    \caption{Acceptance rates for the simulated example in Section \ref{sec:ex_wav}. The dashed line corresponds to the theoretical lower bound $\exp(-\varepsilon).$}
    \label{fig:wav_arate}
\end{figure}

\begin{figure}[t]
    \centering
    \includegraphics[width=\linewidth]{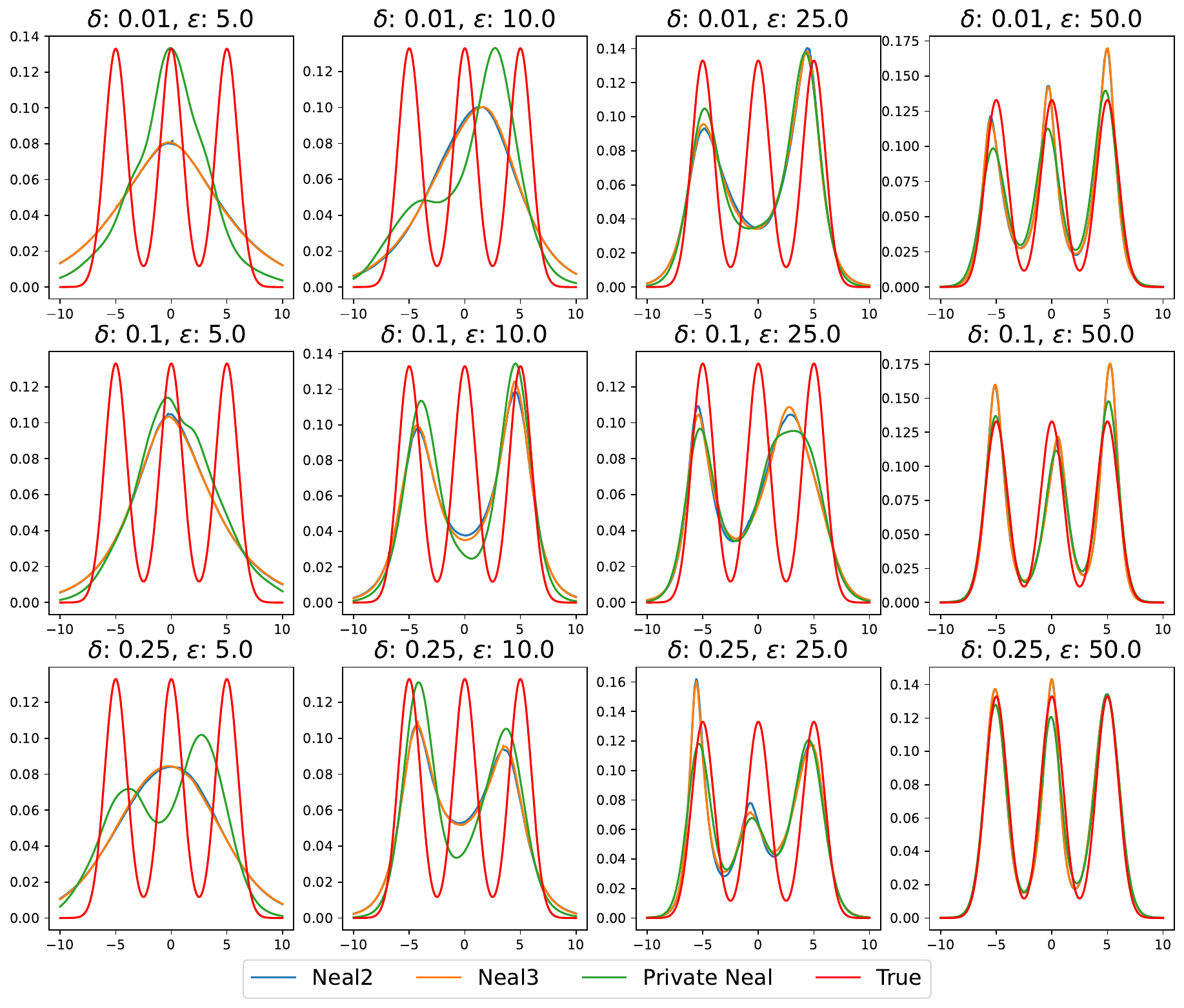}
    \caption{Density estimate for the simulated example in Section \ref{sec:ex_gauss} for different values of $\varepsilon$ and $\delta$.}
    \label{fig:gauss_dens}
\end{figure}

\begin{figure}[t]
    \centering
     \includegraphics[width=0.6\linewidth]{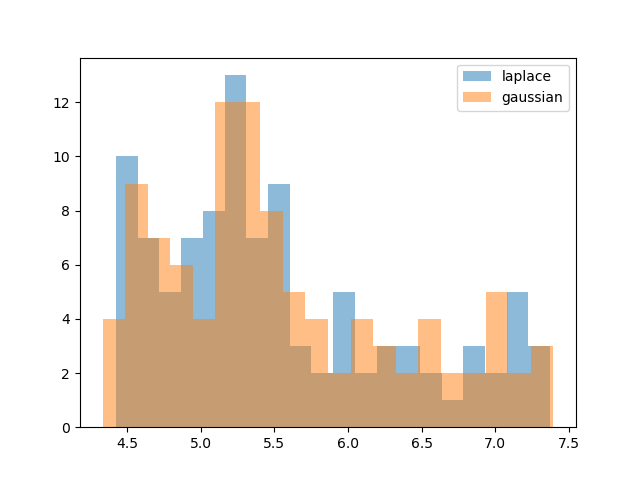}
    \caption{Histogram of the AVIS dataset in Section \ref{sec:blood_donors} after the perturbation with Laplace and Gaussian noise}
    \label{fig:blood-pert}
\end{figure}

\end{document}